\documentclass[acmsmall,screen,nonacm]{acmart}\settopmatter{printfolios=false,printccs=false,printacmref=false}

\citestyle{acmauthoryear}
\usepackage{booktabs}
\usepackage{subcaption} 
\usepackage{amsmath}
\usepackage{eurosym}
\usepackage{wrapfig}
\usepackage{color}
\usepackage{tabularx}
\usepackage{booktabs}
\usepackage{array}
\usepackage{ifthen}
\usepackage{stmaryrd}
\usepackage{mathrsfs}
\usepackage{proof}
\usepackage{multirow}
\usepackage{mathpartir}
\usepackage[noend]{algpseudocode}
\usepackage{caption}
\usepackage{enumitem}
\usepackage{tikz}
\usepackage{pgfplots}
\usetikzlibrary{pgfplots.groupplots}
\pgfplotsset{width=7.5cm,compat=1.12}
\usepgfplotslibrary{fillbetween}
\usepackage{fontawesome}
\usepackage{ulem}
\normalem

\newtheorem{thm}{Theorem}
\newtheorem{cor}{Corollary}
\newtheorem{lma}{Lemma}

\newcommand{\DOT}{\top}

\newcommand{\A}{\mathcal{A}}

\newcommand{\D}{\Sigma}

\newcommand{\den}[2][{}]{[\hspace{-2.5pt}[ #2 ]\hspace{-2.5pt}]_{#1}}

\newcommand{\DERS}[1]{\mathrel{\raisebox{-.1em}{\scriptsize$\xrightarrow{\begin{array}{@{}c@{}}#1\\[-.4em]\end{array}}$}}}

\newcommand{\rand}{\mathbin{\texttt{\&}}}
\newcommand{\alt}{\mathbin{\texttt{|}}}
\newcommand{\rnot}{\texttt{\char`~}}

\newcommand{\COMPL}[1]{\complement(#1)}

\newcommand{\LTP}[2][{}]{{\ifthenelse{\equal{#2}{}}{\mathfrak{T}_{#1}}{\mathfrak{T}_{#1}(#2)}}}

\newcommand{\eqdef}{\stackrel{\textsc{\raisebox{-.15em}{\tiny{def}}}}{=}}

\newcommand{\BOOL}{\mathbb{B}}
\newcommand{\TT}{\mathbf{true}}
\newcommand{\FF}{\mathbf{false}}
\newcommand{\band}{\mathrel{\mathbf{and}}}
\newcommand{\bor}{\mathrel{\mathbf{or}}}
\newcommand{\bif}{\mathbf{if}}

\newcommand{\botherwise}{\mathbf{otherwise}}
\newcommand{\biff}{\boldsymbol{\Leftrightarrow}}

\newcommand{\bnot}{\mathbf{not}}

\newcommand{\eps}{\texttt{()}}

\newcommand{\emp}{\bot}

\newcommand{\DERName}{\partial}
\newcommand{\DER}[3][{}]{\ifthenelse{\equal{#1}{}}{\DERName_{#2}(#3)}{\DERName_{#2}^{#1}(#3)}}

\newcommand{\RE}{\mathcal{R}}
\newcommand{\ERE}{\mathcal{R}}

\newcommand{\rden}[1]{{\ifthenelse{\equal{#1}{}}{{\mathbf{L}}}{\mathbf{L}(#1)}}}

\newcommand{\IFF}{\Leftrightarrow}

\newcommand{\caret}{\char`^}

\newcommand{\bslash}[1]{\texttt{\char`\\#1}}

\newcommand{\tuple}[1]{\langle{#1}\rangle}
\newcommand{\pair}[2]{\tuple{#1,#2}}

\newcommand{\dollar}{\char`$}

\newcommand{\IsNullableName}{\mathit{Null}}

\newcommand{\AlwaysNullableName}{\mathit{Null}_{\forall}}

\newcommand{\IsNullable}[2][{}]{\IsNullableName_{#1}(#2)}

\newcommand{\AlwaysNullable}[1]{\AlwaysNullableName(#1)}

\newcommand{\str}[1]{\textrm{"}\texttt{#1}\textrm{"}}

\newcommand{\ReverseOf}[1]{#1^r\!}

\newcommand{\ACC}[3][{}]{{\ifthenelse{\equal{#3}{}}{{\mathbb{L}_{#2}^{#1}}}{\mathbb{L}_{#2}^{#1}(#3)}}}

\newcommand{\deriv}[2][{}]{{\ifthenelse{\equal{#2}{}}{{D_{#1}}}{D_{#1}(#2)}}}

\newcommand{\IsLazy}[1]{{\ifthenelse{\equal{#1}{}}{{\textit{IsLazy}}}{\textit{IsLazy}(#1)}}}

\newcommand{\REV}[1]{\ReverseOf{#1}}

\newcommand{\SAT}[2][{}]{\ifthenelse{\equal{#1}{}}{\textbf{SAT}(#2)}{\textbf{SAT}_{#1}(#2)}}

\newcommand{\la}[1]{\texttt{(?=}#1\texttt{)}}
\newcommand{\laneg}[1]{\texttt{(?!}#1\texttt{)}}
\newcommand{\lb}[1]{\texttt{(?<=}#1\texttt{)}}
\newcommand{\lbneg}[1]{\texttt{(?<!}#1\texttt{)}}

\newcommand{\RELoop}[4][{}]{#2\texttt{\{}#3{,#4}\ifthenelse{\equal{#1}{}}{}{,#1}\texttt{\}}}

\newcommand{\st}{\texttt{*}}
\newcommand{\lazy}{\texttt{?}}
\newcommand{\plus}{\texttt{+}}

\newcommand{\decr}[1]{#1{\stackrel{_\centerdot}{-}}1}

\newcommand{\CCpred}[2][{}]{\ifthenelse{\equal{#1}{}}{\psi_{\texttt{#2}}}{\psi^{#1}_{\texttt{#2}}}}

\newcommand{\ANCH}{\textrm{\scriptsize\faAnchor}}

\newcommand{\first}[1]{\pi_1(#1)}
\newcommand{\second}[1]{\pi_2(#1)}

\newcommand{\IsMatchName}{\textit{IsMatch}}
\newcommand{\IsMatch}[2]{\IsMatchName(#1,#2)}
\newcommand{\FindMatchEndName}{\textit{MatchEnd}}
\newcommand{\FindMatchEnd}[2]{\FindMatchEndName{}(#1,#2)}
\newcommand{\NoMatch}{\lightning}

\newcommand{\loc}[2]{#1\langle #2 \rangle}

\newcommand{\FindAllMatches}[2]{({#1}{\DERS{#2}})}

\newcommand{\FindMatchName}{\textrm{$\mathit{llMatch}$}}
\newcommand{\FindMatch}[2]{\FindMatchName{}(#1,#2)}

\newcommand{\blet}{\textbf{let}}
\newcommand{\bin}{\textbf{in}}

\newcommand{\NullF}[2]{\textit{Null}^{\NoMatch}_{#1}\!(#2)}
\newcommand{\NullE}[2]{\textit{Null}^{\emptyset}_{#1}(#2)}

\newcommand{\FME}[3][{}]{\ifthenelse{\equal{#1}{}}{\FindMatchEndName(#2,#3)}{#1(#2,#3)}}

\newcommand{\LANGREName}{\mathcal{M}}
\newcommand{\LANGRE}[1]{\LANGREName(#1)}

\newcommand{\IsFinal}[1]{\textit{Final}(#1)}
\newcommand{\IsInitial}[1]{\textit{Initial}(#1)}

\newcommand{\ITEBrace}[3]{\left\{\begin{array}{@{}l@{\;}l} #2,&\bif\, #1; \\ #3,&\botherwise. \end{array} \right.}

\newcommand{\myparagraph}[1]{\paragraph*{#1}}

\newcolumntype{R}{>{$}r<{$}}
\newcolumntype{L}{>{$}l<{$}}
\newcolumntype{Z}{>{$}X<{$}}

\newcommand{\MatchUniverse}{\mathcal{U}}
\newcommand{\MatchAlgebra}{\mathfrak{M}}

\newcommand{\eqIff}{\stackrel{\textsc{(ih)}}{\biff}}

\newcommand{\WEBAPP}{\footnote{\url{https://ieviev.github.io/sbre/}}}

\newcommand{\DOTpred}{\CCpred{\!\large{.}\!}}

\begin{document}

\title{Derivative Based Extended Regular Expression Matching Supporting Intersection, Complement and Lookarounds} %

\author{Ian Erik Varatalu}
\orcid{0000-0003-1267-2712}
\affiliation{
  \institution{Tallinn University of Technology}
  \country{Estonia}
}
\email{ian.varatalu@taltech.ee}

\author{Margus Veanes}
\orcid{0009-0008-8427-7977}
\affiliation{
	\institution{Microsoft}
	\country{USA}
}
\email{margus@microsoft.com}

\author{Juhan-Peep Ernits}
\orcid{0000-0002-4591-0425}
\affiliation{
	\institution{Tallinn University of Technology}
	\country{Estonia}
}
\email{juhan.ernits@taltech.ee}

\begin{abstract}
  Regular expressions are widely used in software. Various regular
  expression engines support different combinations of extensions to
  classical regular constructs such as Kleene star, concatenation,
  nondeterministic choice (union in terms of match semantics). The
  extensions include e.g. anchors, lookarounds, counters,
  backreferences. The properties of combinations of such extensions have
  been subject of active recent research.

  In the current paper we present a symbolic derivatives based
  approach to finding matches to regular expressions that, in
  addition to the classical regular constructs, also support complement,
  intersection and lookarounds (both negative and positive lookaheads
  and lookbacks). The theory of computing symbolic derivatives and
  determining nullability given an input string is presented that
  shows that such a combination of extensions yields a match semantics
  that corresponds to an effective Boolean algebra, which in turn
  opens up possibilities of applying various Boolean logic rewrite
  rules to optimize the search for matches.
  
  In addition to the theoretical framework we present an
  implementation of the combination of extensions to demonstrate the
  efficacy of the approach accompanied with practical examples.
\end{abstract}

\maketitle

\pagestyle{plain}

\section{Introduction}
\label{sec:intro}

Regular expressions are supported by standard libraries of all major
programming languages and software development kits. They are
essential in string manipulation, membership testing, text extraction
tasks on strings, etc.  In recent years symbolic derivatives based
regular expression matching has seen rapid development as being
performance wise more predictable than backtracking engines and
modern optimized regular expression engines such as e.g. Hyperscan
\cite{HyperscanUsenix19} and RE2 \cite{re2}, that are all, in essence,
relying on algorithms by \cite{Glu61} or \cite{Thom68}.

The extended regular expressions supported by backtracking engines
often involve support for combinations of extensions, e.g.
backreferences, lookarounds and balancing groups, yielding a family of
languages that has been shown by \cite{DBLP:conf/lata/CarleN09} not to
be closed under intersection. Recent work by
\cite{DBLP:journals/ieicetd/ChidaT23} has explored the expressive
power of the combination of regular expressions with backreferences
and positive lookaheads and devised a new flavor of memory automata to
express their behavior.

We present a derivative-based symbolic extended regular expression
engine, which supports \emph{intersection} (\texttt{\&}),
\emph{complement} ($\sim$) and \emph{lookaround} (\texttt{(?=)},
\texttt{(?!)}, \texttt{(?<=)} and \texttt{(?<!)})  operations,
i.e. both negative and positive lookahead and lookback operations. To
remind the reader, by example of a positive lookahead, the regex
\texttt{a(?=c)} will match \texttt{a} in the string ``\texttt{ac}''
but nothing in ``\texttt{ab}'', i.e. lookarounds do not match text but
positions in strings in a more general way than anchors (e.g. word
boundary \texttt{\bslash{}b}) \cite{friedl2006mastering}. We show that
the extended engine can be used to solve problems that are currently
unfeasible to solve with existing regular expression engines.  We also
show that all context dependent regex anchors can be expressed using
lookarounds.  We have implemented the extended engine as a library for
the .NET platform, which extends the existing .NET nonbacktracking
engine \cite{PLDI2023}, but without preserving backtracking semantics.

We demonstrate the usefulness of the extended engine by examples of precise 
text extraction in a way that is to our knowledge not possible with standard
regexes or involves a factorial blowup of the regex pattern. 
We make the case that intersection and complement are useful
tools in concise definition of regular expressions, showing examples of incremental
specification of regexes by use of complement, and exclusion of unwanted
matches by use of negation.

Our work is motivated by the fact that the current state-of-the-art symbolic regex engine
\cite{PLDI2023} does not support intersection because its semantics is difficult to combine
with all the other features in a meaningful way while maintaining backtracking semantics.
It is also a problem of backwards compatibility to, for example, support \texttt{\&} as a new operator.

\subsection*{Contributions}
\label{sec:contributions}

This paper makes the following contributions:
\begin{itemize}
      \item We build on the .NET 7 nonbacktracking
      regular expression engine \cite{PLDI2023} to develop a theory
      for symbolic derivatives based regular expression matching
      that, in addition to \emph{Kleene star}, \emph{bounded loops}
      (\emph{counters}), \emph{alternation} (which in our case is
      synonymous to \emph{union}) and \emph{concatenation}, supports
      \emph{intersection}, \emph{complement} and \emph{both positive
        and negative lookaround operations} in regexes. The result is
      a significant increment over \cite{PLDI2023} in
      terms of Theorem \ref{thm:DERS}, Theorem \ref{thm:REV} and
      Theorem \ref{thm:Match} accompanied by proofs.
    \item The theoretical results are accompanied by an implementation \cite{sbre_zenodo}
      of an extended regular expression engine that supports all
      Boolean operations on regexes, including intersection and
      complement using a natural syntax making it convenient to
      use.
    \item Due to the match semantics of such a regular expression
      language corresponding to an effective Boolean algebra, it is
      now possible to define and apply numerous Boolean rewrites and
      optimizations that yield better runtime performance. It is thus
      also possible to reason about the validity of such optimizations
      in terms of Boolean algebra.
    \item We show that the extended engine can be used to solve
      problems that are currently either impossible or infeasible to
      solve with standard regular expressions.
    \item We show that all regex anchors can be expressed using
      lookarounds as conjectured in \cite{PLDI2023}. We prove the
      conjecture together with providing an implementation as
      mentioned above.

\end{itemize}

Before proceeding to introducing the theory, let us look at some
practical examples that illustrate how the proposed combination extensions can
be used to establish the existence of matches but also locate them in
strings.

\section{Motivating examples}
\label{sec:examples}

In this section we explain the intuition behind the intersection (conjunction) and complement (negation) operators
in regexes and give some motivating examples of how they can be useful. The examples are
also available in the accompanying web application \WEBAPP{}. The following examples are 
written in the syntax of the .NET regex engine, with the addition of \texttt{\&}, \texttt{$\sim$}, and the symbol $\top$, to denote
a predicate that is true on any character. The syntax is explained in more detail in Section \ref{sec:preliminaries}.

\begin{example}[Password extraction]
\label{ex:password-extraction}

Consider the following regex
\begin{verbatim}
    ^(?=.*[a-z])(?=.*[A-Z])(?=.*\d)[a-zA-Z\d]{8,}$
\end{verbatim}
that originates from a StackOverflow question \footnote{\url{http://stackoverflow.com/questions/19605150/regex-for-password-must-contain-at-least-eight-characters-at-least-one-number-a}} 
asking for a regex for validating a password. The regex in principle is an intersection of
three lookaheads, each of which checks for 
a certain condition and a loop that checks for the length of the password.
\begin{itemize}
    \item \verb!(?=.*[a-z])! checks for at least one lowercase letter
    \item \verb!(?=.*[A-Z])! checks for at least one uppercase letter
    \item \verb!(?=.*\d)! checks for at least one digit
    \item \verb![a-zA-Z\d]{8,}! checks for at least 8 characters
\end{itemize} 
\hfill

The regex is a good example of
emulating conjunctive conditions using lookaheads for checking if there exists a match. However, 
the regex is not suitable for extracting a password from an arbitrary string, 
even when removing the anchors. 
The reason is that the regex engine will try
to match the lookaheads independently, and then backtrack to the beginning of the match
to try the next lookahead. This is an example of a limitation to
what can be expressed in traditional regular expressions. And note that due to the
lack of support for lookarounds in RE2~\cite{Cox10}, Hyperscan~\cite{HyperscanUsenix19}
and the nonbacktracking engine in {.NET} to name a few optimized open source regex libraries,
it cannot be expressed concisely using an engine that does not support lookaheads.

Now consider the following regular expression, composed of intersections which we denote
by the \texttt{\&} symbol:

\begin{verbatim}
    .*[a-z].*&.*[A-Z].*&.*\d.*&[a-zA-Z\d]{8,}
\end{verbatim}
The regular expression has a number of differences from the previous one.

\begin{itemize}
    \item The proposed regex engine will check for all four conditions at the same time,
    without any backtracking. 
    \item Upon meeting one of the conditions, the rest of the alternation is matched
    with one less condition to check. For example, after matching the lowercase letter \texttt{a} at 
    the beginning, the set \texttt{[a-z]} is satisfied, and the regex engine
    will match the rest of the input with the regex \texttt{.*[A-Z].*\&.*{\textbackslash}d.*\&[a-zA-Z{\textbackslash}d]\{7,\}},
    by effectively removing the first condition from the regex, as the remaining \texttt{.*}
    will be subsumed by the 
    other intersections. In Sections \ref{sec:derivatives} and \ref{sec:rewrites} we show how it is achieved.
    \item Because all checks are performed together at the same position, 
    intersections allow precise text extraction in a single pass, which is also 
    highly amenable to parallelization in the form of vectorization. Such optimizations 
    could be beneficial
    for tasks such as scanning for potential credential leaks.
    \item The pattern does not currently check for the validity of the entire string,
    but it can be done by appending the intersection \texttt{\caret{}.*\$} to the regex, which
    uses lookarounds to constrain the regex to match the entire line. 
    Similarly, the regex could be 
    constrained to match text between word borders by appending the intersection 
    \texttt{\bslash{}b\bslash{}S*\bslash{}b} to the regex.
\end{itemize}

Now consider modifying the regular expression to find potential password substrings 
while excluding certain ones, such as those having 2 consecutive digits 
or the word "password" in it. Such conditions cannot be checked effectively
with positive lookaheads but can be achieved with negative ones.
For example, the following regular expression uses negative lookaheads to match a password
that does not have 2 consecutive digits:

\begin{verbatim}
    .*[a-z].*&.*[A-Z].*&.*\d.*&(?!\S*\d\d)[a-zA-Z\d]{8,}
\end{verbatim}

An important detail to point out here, is that in such regular expressions, the
ranges of the lookaheads and the loop are not necessarily the same.
For example, if there was something like \texttt{"@11"} on the same line, after 
the occurrence of the potential match - it would be falsely considered a non-match, 
as \texttt{\bslash{}S*} travels further than \texttt{[a-zA-Z\bslash{}d]\{8,\}}, while the loop would
stop at \texttt{@}.

A better way of writing the pattern would be to convert the negative 
lookahead \texttt{(?!\bslash{}S*\bslash{}d\bslash{}d)} to
\texttt{(?![a-zA-Z\bslash{}d]*\bslash{}d\bslash{}d)}, that is, adjust the loop in the negative lookahead to 
the entire following regular expression body, to guarantee that
the constraints apply to the same range. 
This is where the use of complement (`\texttt{\textasciitilde}`) can help. 

\begin{verbatim}
    .*[a-z].*&.*[A-Z].*&.*\d.*&[a-zA-Z\d]{8,}&~(.*\d\d.*)
\end{verbatim}

The regex above is equivalent to the previous one, but it is constrained to \emph{the 
exact range of the potential match}, as \texttt{$\sim$(.*\bslash{}d\bslash{}d.*)} remains nullable
exactly until two sequential digits are found, after which it turns the entire intersection
of regexes into $\bot$. The negation operator is explained in detail in 
Section \ref{sec:derivatives}.

\end{example}

\begin{example}[Incremental specification of regexes through composition]
\label{ex:incremental-specification}

Consider the following regex:

\begin{verbatim}
    .*A.*&.*B.*&.*C.*        
\end{verbatim}

The regex matches any line that contains the substrings \texttt{A}, \texttt{B} and \texttt{C}
in any order.

Now consider creating an equivalent regular expression without using intersections. The result would be similar to the following:
\begin{verbatim}
    .*A.*B.*C.*|.*A.*C.*B.*|.*B.*A.*C.*|.*B.*C.*A.*|.*C.*A.*B.*|.*C.*B.*A.*
\end{verbatim}

The regex is equivalent to the previous one, but it is much more verbose, as it requires
enumerating all possible permutations of the substrings as separate alternatives.
In such scenarios the use of intersection
(\texttt{\&}) becomes very6 helpful. The number of alternations required 
is equal to the factorial of the number of substrings.

Matching substrings \texttt{A,B,C,D,E,F} this way would require 720 alternations,
while the equivalent regex with intersections
would only require 6 intersections. Note that positive lookaheads can be used for such examples
strictly as long as it is possible to
define a clear lookup range for all the substrings, which is not always the case.

Another important advantage of intersections is that they are not interlinked with each other,
and can be added or removed independently. For example, if we wanted to add a fourth 
substring \texttt{D},
we would only need to add one more intersection, instead of adding 18 alternations and 
modifying the existing 6.
And if we wanted to remove the substring \texttt{C}, we would only need to remove 
one intersection, instead of reducing and modifying all alternations down to 4.

Similarly, if we wanted exclude strings containing the substring \texttt{D},
we would only need append the negated intersection \texttt{$\sim$(.*D.*)},
instead of modifying all existing alternations to 
exclude the substring \texttt{D}.

Such incremental aspect can be significant when dealing with large regexes, 
and can be useful for tasks such as
machine learning based text extraction with regular expressions, where the regex is 
incrementally built by adding
and removing positive or negative intersection terms. Such approach also improves the 
readability of regular expressions,
as the regexes are built in a modular way.

\end{example}

\begin{example}[Matching paragraphs of text containing multiple substrings]
\label{ex:paragraph-matching}

Another practical use case for the new features is matching paragraphs
of text that meet certain conditions, as both intersection and negation
introduce new ways of matching ranges of text with complex
conditions. While it is possible to envisage sequential use of
e.g. \texttt{grep} for positively and negatively matching lines of
text on the input string when each paragraph is on a separate line, we
generalize the example to paragraphs spanning over multiple lines and
paragraphs being separated by double newlines, as is often done in
texts, e.g. in Project Gutenberg books in plain text format. The example
generalizes to any substring start and end condition and provides a
mechanism to also extract the match in addition to checking for the
existence of it.

Note that, for the sake of improved readability, we use the $\top$
predicate to denote the set of all characters, which can be thought of
as a canonical representation of \texttt{[\bslash{}s\bslash{}S]} or \texttt{[\bslash{}w\bslash{}W]}.

For example, consider the following regular expression:

\addvspace{0.25em}
\texttt{\bslash{}n\bslash{}n$\sim$($\top$*\bslash{}n\bslash{}n$\top$*)},

\addvspace{0.25em}
which effectively splits the text 
into ranges not containing 
two sequential newlines, as it becomes nullable with two newlines, and remains nullable until
two newlines are found again, after which it becomes $\bot$. As negation does not match 
at the position
of the final newline, an additional \texttt{\bslash{}n} can be appended to include that as well.

After defining the previous construct, we can extend it with intersections to match
paragraphs containing certain substrings. For example, the following regex matches
paragraphs containing the substrings \texttt{"King"}, \texttt{"Paris"} and \texttt{"English"} in any order: 

\addvspace{0.25em}
\texttt{\bslash{}n\bslash{}n$\sim$($\top$*\bslash{}n\bslash{}n$\top$*)}\&\texttt{$\top\st$King$\top\st$}\&\texttt{$\top\st$Paris$\top\st$}\&\texttt{$\top\st$English$\top\st$}

\addvspace{0.25em}
A key advantage of the approach is that it is very easy to infer the starting predicate of such
a regex and skip to it, we have all the necessary contextual information to do so. A regex that starts with $\top\st$ essentially 
means that you can skip taking derivatives until the starting predicate of the tail of the regex matches. 
In the case of \texttt{$\top\st$King$\top\st$}, the only valid predicate that changes the state of this
regex is $\CCpred{$K$}$, after which the regex becomes \texttt{ing$\top\st$|$\top\st$King$\top\st$}, 
and the next predicate always changes state, as the alternation \texttt{ing$\top\st$} either continues matching in the case of
$\CCpred{$i$}$, or turns to $\bot$, making it an "unskippable" state.

Using such information from all intersections, we can skip taking derivatives until we reach a starting predicate
that changes the state of the regex, which allows us to perform a vectorized string search procedure on any 
skippable state of the regex. Such starting predicate lookup and skipping is explained in more detail in Section~\ref{sec:startset-lookup}.

Another feature, that is exclusively available in the proposed regex engine, is the ability to
define a matching range based on non-matching constraints. For example 
matching a range of text starting with \texttt{"King"}
and ending with \texttt{"Paris"}, that must not
contain two sequential digits between each other can be concisely expressed 
with the following regex:
\texttt{King$\sim$($\top$*\bslash{}d\bslash{}d$\top$*)Paris}
which finds a match in \texttt{"The King in Paris"}, but not \texttt{"The King 11 Paris"}.
Using a negative lookahead for regexes such as this, for example 
\texttt{King(?!$\top\st$?\bslash{}d\bslash{}d)$\top\st$?Paris}, is very difficult,
as the $\top\st$ inside the lookahead will leak far past the occurrence of \texttt{Paris},
making it incorrect if any two sequential digits occur inside the text after the match.

\end{example}

\begin{example}[Conditional matching]
  \label{ex:conditional matching}

Extending the previous example, let us try to locate a paragraph
containing the word ``\texttt{charity}'' only in the case the
paragraph does not contain the word ``\texttt{honor}''. Such properties
can be represented in terms of Boolean combinations of the implication operation, which we, for now,
represent in terms of complement, union and intersection.

Let us first try to find all paragraphs that match the implication $\mathtt{charity}\implies\mathtt{honor}$.
When writing \verb!~(!$\top\st$\verb!\n\n!$\top\st$\verb!)&(~(!$\top\st$\texttt{charity}$\top\st$\verb!)|(!$\top\st$\texttt{honor}$\top\st$\verb!))!
the result is not yet precise enough, because the term
\verb!~(!$\top\st$\verb!\n\n!$\top\st$\verb!)! stipulates that the
match should not contain two consecutive newlines, but there is no
condition that tells that the string \emph{has} to have
double newlines before and after it. In the Example \ref{ex:paragraph-matching} it
worked because we did not use negative conditions, i.e. where some word should not be contained in the match. Here we need to extend the term
for matching paragraphs by appropriate lookarounds as follows:

\addvspace{0.25em}
\verb!(?<=\n\n|\A)~(!$\top\st$\verb!\n\n!$\top\st$\verb!)(?=\n\n|\Z)&(~(!$\top\st$\texttt{charity}$\top\st$\verb!)|(!$\top\st$\texttt{honor}$\top\st$\verb!))!

\addvspace{0.25em}
The result now matches all paragraphs where there is no word ``\texttt{charity}'' or there exists a word ``\texttt{honor}'' (according to the definition of the standard Boolean implication operation). To locate the paragraph where ``\texttt{charity}'' is present, but ``\texttt{honor}'' is not, we need to take the complement of the implication and write

\addvspace{0.25em}
\verb!(?<=\n\n|\A)~(!$\top\st$\verb!\n\n!$\top\st$\verb!)(?=\n\n|\Z)&~(~(!$\top\st$\texttt{charity}$\top\st$\verb!)|(!$\top\st$\texttt{honor}$\top\st$\verb!))!

\addvspace{0.25em}
\noindent{which in turn can be rewritten by applying de Morgan's rule into}

\addvspace{0.25em}
\verb!(?<=\n\n|\A)~(!$\top\st$\verb!\n\n!$\top\st$\verb!)(?=\n\n|\Z)&(!$\top\st$\texttt{charity}$\top\st$\verb!)&~(!$\top\st$\texttt{honor}$\top\st$\verb!)! 

\addvspace{0.25em}
\noindent{The resulting regular expression matches the paragraph that contains the word ``\texttt{charity}'' but does not contain ``\texttt{honor}''.}

\addvspace{0.25em}
It is possible to utilize the If-Then-Else conditionals
\verb!(?(?=!$R_1$\verb!)(!$R_2$\verb!)|(!$R_3$\verb!))! in traditional
regex engines supporting the construct. The explanation of
If-Then-Else is nontrivial \cite{friedl2006mastering}: $R_1$ acts as a
test and if the test yields true then $R_2$ gets applied, otherwise
$R_3$. $R_1$ can either be a reference to a capturing group and test
if \emph{the group participated in a match} (which is different from
the actual match) or be a lookaround and yield a Boolean result at a
particular location. As mentioned before, there are differences in the
extent of the string in which lookarounds get evaluated and the extent
of loops, as was shown in the Example \ref{ex:password-extraction}. As
a result, writing a conditional regular expression in terms of
If-Then-Else corresponding to the above requirements is error prone
and requires a great deal of care. We argue that the use of complement
and intersection (and consequently a whole Boolean algebra) together
with lookarounds simplifies the task significantly.

\end{example}

We now proceed to establishing the theory for regular expressions
extended with lookarounds, complement and intersection.

\section{Preliminaries}
\label{sec:preliminaries}

Here we introduce the notation and main concepts used in the paper.
The general meta-notation, notation for denoting strings and locations
is based on the approach taken in \cite{PLDI2023}.

As a general meta-notation throughout the paper
we write $\textit{lhs}\eqdef\textit{rhs}$
to let \textit{lhs} be \emph{equal by definition} to \textit{rhs}.
Let $\BOOL = \{\FF,\TT\}$ denote Boolean values.
Let $\D$ be a domain of
\emph{characters}.
$\pair{x}{y}$ stands for pairs of elements and let
$\first{\pair{x}{y}}\eqdef x$ and
$\second{\pair{x}{y}}\eqdef y$.

\myparagraph{Strings.}
Let $\epsilon$ or $\str{}$ denote the empty string and let $\D^*$
denote the set of all strings over $\D$. Let $s\in\D^*$.  The length
of $s$ is denoted by $|s|$.  Individual characters and strings of length 1
are not distinguished. Let $i$ and $l$ be nonnegative
integers such that $i+l \leq |s|$.  Then $s_{i,l}$ denotes the
substring of $s$ that starts from index $i$ and has length $l$, where
the first character has index 0.  In particular $s_{i,0}=\epsilon$.
For $0\leq i<|s|$ let $s_i \eqdef s_{i,1}$.
Let also $s_{-1}=s_{|s|}\eqdef\epsilon$.
E.g.,~$\str{abcde}_{1,3}=\str{bcd}$ and $\str{abcde}_{5,0}=\epsilon$.
$\REV{s}$ denotes the \emph{reverse} of $s$,
so that $\REV{s}_i = s_{|s|-1-i}$ for $0\leq i < |s|$.

\myparagraph{Locations.}
Let $s$ be a string.  A \emph{location in $s$} is a pair
$\pair{s}{i}$, where $-1\leq i \leq |s|$.  We use
$\loc{s}{i}\eqdef\pair{s}{i}$ as a dedicated notation for locations,
where $s$ is the \emph{string} and $i$ the \emph{position} of the
location.  Since $\loc{s}{i}$ is a pair, note also that
$\first{\loc{s}{i}} = s$ and $\second{\loc{s}{i}} = i$.
If $x$ and $y$ are locations, then $x<y$ iff $\second{x}<\second{y}$.
A location $\loc{s}{i}$ is \emph{valid} if $0\leq i \leq |s|$.
A location $\loc{s}{i}$ is called \emph{final} if $i=|s|$
and \emph{initial} if $i=0$.
Let $\IsFinal{\loc{s}{i}} \eqdef i=|s|$ and $\IsInitial{\loc{s}{i}} \eqdef i=0$.
We let $\NoMatch\eqdef\loc{\epsilon}{-1}$
that is going to be used to represent \emph{match failure}
and in general $\loc{s}{-1}$ is used as a \emph{pre-initial} location.
The \emph{reverse} $\REV{\loc{s}{i}}$ of a valid location $\loc{s}{i}$
in $s$ is the valid location $\loc{\REV{s}}{|s|{-}i}$ in
$\REV{s}$. For example, the reverse of the final location in $s$ is the
initial location in $\REV{s}$.
When working with sets $S$ of locations over the same string we let
$\max(S)$ ($\min(S)$) denote the maximum (minimum) location in the set 
according to the location order above.
In this context we also let $\max(\emptyset)=\min(\emptyset)\eqdef\NoMatch$
and $\REV{\NoMatch}\eqdef\NoMatch$.

Valid locations in a string $s$ can be illustrated by

\[
\begin{array}{ccccccccc} %
\loc{s}{\!0\!} & $\mbox{\Huge ${s_0}$}$& \loc{s}{\!1\!} & $\mbox{\Huge ${s_1}$}$ & \loc{s}{\!2\!} & \cdots & \loc{s}{\!|s|\!{-}\!1\!} & $\mbox{\Huge ${s_{|s|\!-\!1}}$}$ & \loc{s}{\!|s|\!} \\
\end{array}
\]
and should be interpreted as \emph{border positions}
rather than character positions.
For example, $\epsilon$ has only one valid
location $\loc{\epsilon}{0}$ that is both initial and final.

\myparagraph{Boolean algebras as alphabet theories.}
The tuple
$
\A=(\D, \Psi,
\den{\_}, \bot, \top, \vee, \wedge, \neg)
$
is called an \emph{effective Boolean algebra over $\D$} where $\Psi$ is a set
of \emph{predicates} that is closed under the Boolean connectives;
$\den{\_} : \Psi \rightarrow 2^{\D}$ is a \emph{denotation function};
$\bot, \top \in \Psi$; $\den{\bot} =
\emptyset$, $\den{\top} = \D$, and for all $\varphi, \psi \in \Psi$,
$\den{\varphi \vee \psi} = \den{\varphi} \cup \den{\psi}$,
$\den{\varphi \wedge \psi} = \den{\varphi} \cap \den{\psi}$, and
$\den{\neg \varphi} = \D \setminus \den{\varphi}$. 
Two predicates $\phi$ and $\psi$
are \emph{equivalent} when $\den{\phi}=\den{\psi}$, denoted by
$\phi\equiv\psi$.
If $\varphi\nequiv\bot$ then $\varphi$ is \emph{satisfiable} or
$\SAT{\varphi}$.

\myparagraph{Character classes.}
In all the examples below we let $\D$ stand for the standard 16-bit
character set of Unicode\footnote{Also known as \emph{Plane 0} or
the \emph{Basic Multilingual Plane} of Unicode.}  and use the {.NET}
syntax~\cite{CSharpRegexRef} of regular expression character
classes.  For example, \texttt{[A-Z]} stands for all the Latin capital
letters, \texttt{[0-9]} for all the Latin numerals,
\texttt{\bslash{d}} for all the decimal digits,
\texttt{\bslash{w}} for all the word-letters and
dot (\texttt{.}) for all characters besides the \emph{newline
character} \bslash{n}. (It is a standard convention that,
by default, dot does not match \bslash{n}.). We also use the
$\top$ predicate to denote the set of all characters, which can be
thought of as a canonical representation of
$\texttt{[\bslash{s}\bslash{S}]}$ (or ${\texttt{[\bslash{w}\bslash{W}]}}$).

When we need to distinguish the concrete representation of character
classes from the corresponding abstract representation of predicates
in $\A$ we map each character class $C$ to the corresponding
predicate $\CCpred{$C$}$ in $\Psi$.  For example
$\CCpred{[\caret0-9]}\equiv\lnot\CCpred{[0-9]}$,
$\CCpred{{[\bslash{w}-[\bslash{d}]]}}\equiv\CCpred{\bslash{w}}\land\lnot\CCpred{\bslash{d}}$,
and $\CCpred{\bslash{W}} \equiv \lnot\CCpred{\bslash{w}}$.  Observe
also that
$\den{\CCpred{[0-9]}}\subsetneq\den{\CCpred{\bslash{d}}}\subsetneq\den{\CCpred{\bslash{w}}}$
and $\den{\CCpred{\bslash{n}}}=\{\bslash{n}\}$ and
$\bslash{n}\notin\den{\CCpred{\bslash{w}}}$ because $\bslash{n}$ is
not a word-letter.
All predicates in $\Psi$ are represented in a \emph{canonical} form in the implementation so that
if $\den{\varphi}=\den{\psi}$ then $\varphi=\psi$ by reusing the
predicate algebra for Unicode Plane 0 that is built-in into the {.NET} 7 runtime.
In other words, character classes are only used as the concrete notation,
while, e.g., $\CCpred{\texttt{[\bslash{s}\bslash{S}]}} = \CCpred{\texttt{[\bslash{w}\bslash{W}]}}=\top$.

\section{Regexes with Lookarounds and Location Derivatives}
\label{sec:anchored}
Here we formally define regular expressions with \emph{lookarounds}
and \emph{loops} supporting finite and infinite bounds.  Regexes are
defined modulo a character theory $\A=(\D, \Psi, \den{\_}, \bot,\DOT,
\vee, \wedge, \neg)$ that we illustrate with standard (.NET Regex)
character classes in examples while the actual representation of
character classes in $\Psi$ is irrelevant for the purposes here where
$\D$ may even be infinite.  After the definition of regexes, we
formally define a framework of \emph{derivatives} that leads to the
key notion of \emph{derivation relation} between locations that is
used to prove properties in this framework.  We build on and extend
definitions from~\cite{PLDI2023} by incorporating \emph{intersection},
\emph{complement}, \emph{loops}, and \emph{lookarounds} into a single
unified theory of location derivatives and matching.

\subsection{Regexes}
\label{sec:RE-def}
The class $\RE$ of extended regular expressions with lookarounds,
or \emph{regexes}, is here defined by
the following abstract grammar. Let $\psi\in\Psi$, and $0\leq m \leq
n\neq 0$, or $n=\infty$, and let $R$ range over $\RE$.
\[
\begin{array}{rcl}
R  &::= &
\psi {\quad\mid\quad}
\eps {\quad\mid\quad}
R \alt R' {\quad\mid\quad}
R \rand R' {\quad\mid\quad}
R\cdot R' {\quad\mid\quad}
\RELoop{R}{m}{n} \quad{\mid}\quad
\rnot R  \quad{\mid}
\\[.5em]
&&
\la{R} {\quad\mid\quad} \lb{R} {\quad\mid\quad} \laneg{R} {\quad\mid\quad} \lbneg{R}
\end{array}
\]
where the operators in the first row appear in order of precedence,
with \emph{union} ($\alt$) having lowest and \emph{complement} ($\rnot$) having highest precedence.
The remaining operators in the first row are the \emph{empty-word regex} ($\eps$),
\emph{intersection} ($\rand$), \emph{concatenation} $(\cdot)$, 
and the \emph{loop} expression $\RELoop{R}{m}{n}$ where $R$ is the
\emph{body}, $m$ the \emph{lower bound}, and $n$ the \emph{upper bound} of the loop.
If $n=\infty$ then the loop is \emph{infinite} else \emph{finite}.
We let $\RELoop{R}{0}{0}\eqdef\eps$
for convenience in recursive definitions.
We use the common abbreviations $R\st$ for $\RELoop{R}{0}{\infty}$,
$R\plus$ for $\RELoop{R}{1}{\infty}$.

The regex denoting \emph{nothing} is just the predicate $\bot$.
Concatenation operator $\cdot$ is often implicit by using
juxtaposition.
The expressions in the second row are called \emph{lookarounds}:
$\la{R}$ is \emph{lookahead}, $\lb{R}$ is \emph{lookback}, $\laneg{R}$
is \emph{negative lookahead}, and $\lbneg{R}$ is \emph{negative
lookback}.

The \emph{reverse} $\REV{R}$ of $R\in\RE$ is defined as follows:
\[
\begin{array}{rcl@{\quad}rcl}
\REV{\psi}&\eqdef&\psi & \REV{\la{R}} &\eqdef& \lb{\REV{R}}\\
\REV{\eps}&\eqdef&\eps & \REV{\lb{R}} &\eqdef& \la{\REV{R}}\\
\REV{(R \alt S)}&\eqdef&\REV{R} \alt \REV{S} & \REV{\laneg{R}} &\eqdef& \lbneg{\REV{R}} \\
\REV{(R \rand S)}&\eqdef&\REV{R} \rand \REV{S} & \REV{\lbneg{R}} &\eqdef& \laneg{\REV{R}}\\
\REV{(R \cdot S)}&\eqdef&\REV{S} \cdot \REV{R} \\
\REV{\RELoop{R}{m}{n}}&\eqdef&\RELoop{\REV{R}}{m}{n} \\
\REV{(\rnot R)}&\eqdef&\rnot(\REV{R})
\end{array}
\]
Note that the reverse $\REV{R}$ of a regex $R$ has exactly the same
size as $R$. The size of a regex, $|R|$, is defined recursively as
the number of subexpressions, where each predicate $\psi\in\Psi$ is considered to have size
one, i.e., the actual representation size of predicates is irrelevant
in this context. It also follows by induction from the definition that
$\REV{(\REV{R})}=R$.

\subsection{Derivatives}
\label{sec:derivatives}
In the following definition let $x=\loc{s}{i}$ be a valid nonfinal location.
For example, if $s=\str{ab}$ then the valid nonfinal locations in $s$
are $\loc{s}{0}$ and $\loc{s}{1}$.
In the following let $\ell$ range over lookarounds.
and let $\decr{\infty}\eqdef\infty$, $\decr{0}\eqdef 0$,
and $\decr{k} \eqdef k-1$ for $k> 0$.
\[
\begin{array}{@{}rcl@{}}
\DER{x}{\eps}    &\eqdef& \emp \\
\DER{x}{\ell}    &\eqdef& \emp \\
\DER{\loc{s}{i}}{\psi} &\eqdef& \ITEBrace{s_i\in\den{\psi}}{\eps}{\bot} \\
\DER{x}{R \alt S} &\eqdef& \DER{x}{R}\alt\DER{x}{S} \\
\DER{x}{R \rand S} &\eqdef& \DER{x}{R}\rand\DER{x}{S} \\
\DER{x}{\rnot R} &\eqdef& \rnot\DER{x}{R} \\
\DER{x}{R{\cdot}S} &\eqdef& \ITEBrace{\IsNullable[x]{R}}{\DER{x}{R}{\cdot}S{\alt}\DER{x}{S}}{\DER{x}{R}{\cdot}S}
\\
\DER{x}{\RELoop{R}{m}{n}}
& \eqdef &
\ITEBrace{m{=}0 \bor
\AlwaysNullable{R}{=}\TT \bor
\IsNullable[x]{R}{=}\FF}{
\DER{x}{R}{\cdot}\RELoop{R}{\decr{m}}{\decr{n}}
}{
\DER{x}{R{\cdot}\RELoop{R}{\decr{m}}{\decr{n}}}
}
\end{array}
\]
where $\AlwaysNullable{R}\eqdef\forall x(\IsNullable[x]{R}=\TT)$ intuitively
means that $R$ is nullable and nullability does not depend on lookarounds.
Next we define nullability of locations and what it means to find a
match \emph{end} location from a valid \emph{start} location $x$ in a
string $s$ by a regex $R\in\RE$.  $\FME{x}{R}$ returns the
\emph{latest match end location} from a valid $x$ or $\NoMatch$ if
none exists.  Note that $\max(x,\NoMatch)=\max(\NoMatch,x)=x$.
\[
\begin{array}{rcl}
\NullF{x}{R}&\eqdef&\ITEBrace{\IsNullable[x]{R}}{x}{\NoMatch} \\
\FME{x}{R} &\eqdef& \ITEBrace{\IsFinal{x}}{\NullF{x}{R}}{\max(\NullF{x}{R},\FME{x{+}1}{\DER{x}{R}})}\\
\IsMatch{x}{R}&\eqdef&\FME{x}{R}\neq\NoMatch \\
\IsNullable[x]{\eps} &\eqdef&\TT\\
\IsNullable[x]{\psi} &\eqdef&\FF\\
\IsNullable[x]{R\alt R'} &\eqdef& \IsNullable[x]{R} \bor  \IsNullable[x]{R'}\\
\IsNullable[x]{R\rand R'} &\eqdef& \IsNullable[x]{R} \band  \IsNullable[x]{R'}\\
\IsNullable[x]{R\cdot R'} &\eqdef& \IsNullable[x]{R} \band  \IsNullable[x]{R'}\\
\IsNullable[x]{\RELoop{R}{m}{n}} &\eqdef& m=0 \bor \IsNullable[x]{R} \\
\IsNullable[x]{\rnot R} &\eqdef& \bnot\, \IsNullable[x]{R} \\
\IsNullable[x]{\la{R}} &\eqdef& \IsMatch{x}{R} \\
\IsNullable[x]{\lb{R}} &\eqdef& \IsMatch{\REV{x}}{\REV{R}} \\
\IsNullable[x]{\laneg{R}} &\eqdef& \bnot\,\IsMatch{x}{R} \\
\IsNullable[x]{\lbneg{R}} &\eqdef& \bnot\,\IsMatch{\REV{x}}{\REV{R}}
\end{array}
\]
Observe  that  the  definition  of  $\FME{x}{R}$  with  $x=\loc{s}{i}$
computes  the transition  from the  source  state (regex)  $R$ to  the
target state $S=\DER{x}{R}$ for the character $s_i$ and then continues
matching  from  location  $x+1$  and state  $S$.   The  definition  of
derivatives  and   $\IsMatchName$  are  mutually   recursive,  through
nullability  test  of  lookarounds.  The  definition of $\IsNullable[x]{\ell}$
of lookarounds $\ell$ is  well-defined
because the regex $R$ in $\ell$ is
a strictly smaller  expression  than $\ell$ and
$\REV{R}$ has the same size as $R$.

The definition of $\IsMatch{x}{R}$ terminates as soon as $\IsNullable[x]{R}=\TT$.
This is an obvious but important optimization that is absent from the
abstract definition above, along many other common cases, to avoid redundant
computation.

\subsection{Derivation Relation}

The key concept is the \emph{derivation relation} $x\DERS{R}y$ between locations,
that is instrumental in reasoning about properties of derivatives.
Given a valid location $x$, 
the definition of \emph{all matches of $R$ from $x$}, $\FindAllMatches{x}{R}$,
is as follows.
\[
\begin{array}{r@{\;}c@{\;}l@{\quad}r@{\;}c@{\;}l}
\FindAllMatches{x}{R} &\eqdef&
\ITEBrace{\IsFinal{x}}{\NullE{x}{R}}{\NullE{x}{R}\cup\FindAllMatches{x{+}1}{\DER{x}{R}}} &
\textrm{where}\quad\NullE{x}{R}&\eqdef& \ITEBrace{\IsNullable[x]{R}}{\{x\}}{\emptyset} 
\\
x\DERS{R}y & \eqdef&  y \in \FindAllMatches{x}{R} \\
\end{array}
\]
Two regexes $R$ and $S$ are \emph{equivalent}, denoted by $R\equiv S$,
if, for all locations $x$ and $y$, $x\DERS{R}y \biff x\DERS{S}y$.
The following is the main derivation theorem that
extends~\cite[Theorem~3.3]{PLDI2023}.
When $x$ is a final location we let
$\FindAllMatches{x{+}1}{R}\eqdef\emptyset$ for convenience
in formal arguments. Note also that, for all $x$,
$\FindAllMatches{x}{\emp}=\emptyset$.

\begin{thm}[Derivation] For all regexes and valid locations, let $\ell$ be a lookaround and $\psi\in\Psi$:
\label{thm:DERS}
\begin{enumerate}
\item
\label{thm:DERS:()}
$x\DERS{\eps}y \biff x=y$;
\item
\label{thm:DERS:a}
$x\DERS{\ell}y \biff \IsNullable[x]{\ell}\band x=y$;
\item
\label{thm:DERS:b}
$\loc{s}{i}\DERS{\psi}y \biff s_i\in\den{\psi} \band y = \loc{s}{i+1}$;
\item
\label{thm:DERS:d}
$x\DERS{R{\alt}S}y \biff (x\DERS{R}y \bor x\DERS{S}y)$;
\item
\label{thm:DERS:i}
$x\DERS{R{\rand}S}y \biff (x\DERS{R}y \band x\DERS{S}y)$;
\item
\label{thm:DERS:n}
$x\DERS{\rnot R}y \biff \bnot\,(x\DERS{R}y)$;
\item
\label{thm:DERS:c}
$x\DERS{R{\cdot}S}y \biff \exists z(x\DERS{R}z\DERS{S}y)$;
\item
\label{thm:DERS:g}
$\forall m>0: \RELoop{R}{m}{n} \equiv R{\cdot}\RELoop{R}{m-1}{n-1} \equiv \RELoop{R}{m-1}{n-1}{\cdot}R$;
\item
\label{thm:DERS:f}
$\RELoop{R}{0}{n} \equiv \RELoop{R}{1}{n}{\alt}\eps$;
\end{enumerate}
\end{thm}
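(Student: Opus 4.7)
The plan is to prove all nine clauses simultaneously by strong induction on $|s|-i$ where $x=\loc{s}{i}$, i.e.\ on the number of characters remaining to the right of the start location. This measure strictly decreases each time $\FindAllMatches{x}{R}$ recurses on $\FindAllMatches{x{+}1}{\DER{x}{R}}$, so an IH can be invoked at $x{+}1$ for any regex and any clause. The recurring move is to unfold $y\in\FindAllMatches{x}{R}$ into ``$y=x\band\IsNullable[x]{R}$, or $x$ non-final and $y\in\FindAllMatches{x{+}1}{\DER{x}{R}}$'', which turns each claim into a comparison of nullabilities at $x$ together with the derivation behavior at $x{+}1$.

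In the base case (final $x$), every clause degenerates into a boolean identity on $\IsNullable[x]{\cdot}$: (1) holds because $\IsNullable[x]{\eps}=\TT$; (3) is vacuous since $s_i$ is empty, so $s_i\notin\den{\psi}$; (4)--(6) mirror the boolean definition of $\IsNullable[x]$ pointwise; and (7) holds by the witness $z=x$. For the inductive step on (1)--(6) the derivative is defined homomorphically over the top-level constructor, so a single instance of the IH at $x{+}1$ combined with the matching nullability clause yields each equivalence. In (6), $\FindAllMatches{x}{R}$ always lies in $\{x,x{+}1,\ldots,\loc{s}{|s|}\}$, so the complementation asserted by the IH is implicitly taken within this forward range.

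The first substantive case is (7). Here $\DER{x}{R\cdot S}$ branches on $\IsNullable[x]{R}$: the always-present summand $\DER{x}{R}\cdot S$ accounts via the IH on (7) at $x{+}1$ for existential witnesses $z>x$; the extra summand $\DER{x}{S}$ (present exactly when $\IsNullable[x]{R}$) accounts for the witness $z=x$ with $x$ non-final; and the $\NullE{x}{R\cdot S}$ contribution, using $\IsNullable[x]{R\cdot S}=\IsNullable[x]{R}\band\IsNullable[x]{S}$, covers $y=z=x$. Each piece is in one-to-one correspondence with a slice of the existential characterization.

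The main obstacle is (8) and (9), where the two sides have syntactically different derivatives but must yield identical match sets. For (8) with $m>0$, the plan is to case-split on the guard in $\DER{x}{\RELoop{R}{m}{n}}$. When $\bnot\IsNullable[x]{R}$, both $\DER{x}{\RELoop{R}{m}{n}}$ and $\DER{x}{R\cdot\RELoop{R}{m{-}1}{n{-}1}}$ collapse to $\DER{x}{R}\cdot\RELoop{R}{m{-}1}{n{-}1}$; when $\IsNullable[x]{R}$ but $\bnot\AlwaysNullable{R}$, the loop derivative is defined to be $\DER{x}{R\cdot\RELoop{R}{m{-}1}{n{-}1}}$ literally, so the step is immediate. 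The delicate sub-case is $\AlwaysNullable{R}$: the loop derivative is only $\DER{x}{R}\cdot\RELoop{R}{m{-}1}{n{-}1}$, whereas $\DER{x}{R\cdot\RELoop{R}{m{-}1}{n{-}1}}$ additionally carries a disjunct $\DER{x}{\RELoop{R}{m{-}1}{n{-}1}}$ that must be shown to contribute no new endpoints. The argument is that under $\AlwaysNullable{R}$, any iterated match of the inner loop starting at $x{+}1$ can be re-presented as a one-character prefix of $R$ from $x$ followed by the same iteration count, with the missing iteration padded by an empty match of $R$; via the IH on (7) this match already appears in $\DER{x}{R}\cdot\RELoop{R}{m{-}1}{n{-}1}$. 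The right-unfolding $\RELoop{R}{m{-}1}{n{-}1}\cdot R$ in (8) is handled by a symmetric case-split on $\IsNullable[x]{\RELoop{R}{m{-}1}{n{-}1}}$, and (9) follows in one step from $\IsNullable[x]{\RELoop{R}{0}{n}}=\TT$ together with the forward unfolding of (8).
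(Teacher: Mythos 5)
Your proposal is correct and follows essentially the same route as the paper: induction over a location-distance measure (the paper uses $\second{y}-\second{x}$ with $y$ fixed, you use $|s|-i$, which decreases identically under the one-step unfolding of $\FindAllMatches{x}{R}$), with the same base-case/step split via nullability and the same case analysis on $\IsNullable[x]{R}$ and $\AlwaysNullable{R}$ for concatenation and loops. Your ``peel the first non-empty iteration of $R$ and pad with an empty one'' argument for the delicate $\AlwaysNullable{R}$ sub-case is the same absorption step that the paper carries out algebraically, via the commutation $R{\cdot}(L{-}1)\equiv(L{-}1){\cdot}R$ obtained from the IH together with $X\alt X{\cdot}R\equiv X{\cdot}R$.
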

\begin{proof}
  \ref{thm:DERS}(\ref{thm:DERS:()}) follows from $\IsNullable[x]{\eps}=\TT$,
  $\DER{x}{\eps}=\emp$ and $\FindAllMatches{x{+}1}{\emp}=\emptyset$.

\myparagraph{Proof of \ref{thm:DERS}(\ref{thm:DERS:a})}
We have that $x\DERS{\ell}y$ iff $y \in \FindAllMatches{x}{\ell}$ iff
$y \in \NullE{x}{\ell}$ because
$\DER{x}{\ell}=\emp$ and $\FindAllMatches{x{+}1}{\emp}=\emptyset$.
We also have that if $\IsNullable[x]{\ell}$ then $\NullE{x}{\ell}=\{x\}$
else $\NullE{x}{\ell}=\emptyset$.

\myparagraph{Proof of \ref{thm:DERS}(\ref{thm:DERS:b})}
Here $x=\loc{s}{i}$ is nonfinal.
It follows that
$\loc{s}{i}\DERS{\psi}y$ iff
$y\in\FindAllMatches{x{+}1}{\DER{x}{\psi}}$
iff $s_i\in\den{\psi}$ and $y\in\NullE{x{+}1}{\eps}$
iff $s_i\in\den{\psi}$ and $y=x{+}1=\loc{s}{i{+}1}$.

\myparagraph{Proof of \ref{thm:DERS}(\ref{thm:DERS:d})}
By induction over $\second{y}-\second{x}$ with $y$ fixed.
If $x=y$ then
\[
\begin{array}{l}
x\DERS{R \alt S}y
\stackrel{(x=y)}{\biff}
\IsNullable[x]{R\alt S} \biff
\IsNullable[x]{R} \bor \IsNullable[x]{S} 
\stackrel{(x=y)}{\biff}
x\DERS{R}y \bor  x\DERS{S}y 
\end{array}
\]
If $x<y$ then
\[
\begin{array}{rl}
x\DERS{R{\alt}S}y
\biff
x{+}1\DERS{\DER{x}{R{\alt}S}}y
&\biff
x{+}1\DERS{\DER{x}{R}{\alt}\DER{x}{S}}y \\
&\stackrel{\textsc{ih}}{\biff}
x{+}1\DERS{\DER{x}{R}}y \bor x{+}1\DERS{\DER{x}{S}}y 
\biff
x \DERS{R} y \bor x\DERS{S}y
\end{array}
\]

\myparagraph{Proof of \ref{thm:DERS}(\ref{thm:DERS:i})}
By induction over $\second{y}-\second{x}$ with $y$ fixed.
For the base case $x=y$ we get that %
\[
\begin{array}{l}
x\DERS{R \rand S}y
\stackrel{(x=y)}{\biff}
\IsNullable[x]{R\rand S} \biff
\IsNullable[x]{R} \band \IsNullable[x]{S} 
\stackrel{(x=y)}{\biff}
x\DERS{R}y \band  x\DERS{S}y 
\end{array}
\]
For the induction case $x<y$ we get that
\[
\begin{array}{rl}
x\DERS{R \rand S}y 
\stackrel{(x<y)}{\biff}
x{+}1\DERS{\DER{x}{R\rand S}}y
&\biff
x{+}1\DERS{\DER{x}{R}\rand \DER{x}{S}}y \\
&\stackrel{\textsc{(ih)}}{\biff}
x{+}1\DERS{\DER{x}{R}}y \band x{+}1\DERS{\DER{x}{S}}y
\stackrel{(x<y)}{\biff}
x\DERS{R}y \band x\DERS{S}y
\end{array}
\]

\myparagraph{Proof of \ref{thm:DERS}(\ref{thm:DERS:n})}
By induction over $\second{y}-\second{x}$.
For the base case $x=y$ we get that %
\[
\begin{array}{l}
x\DERS{\rnot R}y
\stackrel{(x=y)}{\biff}
\IsNullable[x]{\rnot R} \biff
\bnot(\IsNullable[x]{R})
\stackrel{(x=y)}{\biff}
\bnot(x\DERS{R}y)
\end{array}
\]
For the induction case $x<y$ we get that
\[
\begin{array}{l}
x\DERS{\rnot R}y
\stackrel{(x<y)}{\biff}
x+1\DERS{\DER{x}{\rnot R}}y \biff
x+1\DERS{\rnot\DER{x}{R}}y
\stackrel{\textsc{(ih)}}{\biff}
\bnot(x+1\DERS{\DER{x}{R}}y)
\stackrel{(x<y)}{\biff}
\bnot(x\DERS{R}y)
\end{array}
\]

\myparagraph{Proof of \ref{thm:DERS}(\ref{thm:DERS:c})} by induction over $\second{y}-\second{x}$ with $y$ fixed.
If $x=y$ then
\[
x\!\DERS{R{\cdot}S}\!y \biff
y\in\NullE{x}{R{\cdot}S}\biff (y\in\NullE{x}{R}\band y\in\NullE{x}{S})
\biff (x\!\DERS{R}\!y\!\DERS{S}\!y) \biff \exists z(x\!\DERS{R}\!z\!\DERS{S}\!y)
\]
Observe that $z$ in the last
`$\biff$' must be $x$ because $x\DERS{R}z\DERS{S}x$ implies that $x\leq z \leq x$.
If $x<y$ then, by case analysis over $\IsNullable[x]{R}$, if  $\IsNullable[x]{R}=\FF$ then
\begin{align*}
x\DERS{R{\cdot}S}y \biff
x{+}1\DERS{\DER{x}{R{\cdot}S}}y 
\biff
x{+}1\DERS{\DER{x}{R}{\cdot}S}y 
\stackrel{\textsc{ih}}{\biff}
\exists z(x{+}1\DERS{\DER{x}{R}}z\DERS{S}y)
\biff
\exists z(x\DERS{R}z\DERS{S}y)
\end{align*}
In the last `$\biff$' above $\IsNullable[x]{R}=\FF$ so $x{+}1\leq z$.
If $\IsNullable[x]{R}=\TT$ then
\[
\begin{array}{rcl}
x\DERS{R{\cdot}S}y &\biff&
x{+}1\DERS{\DER{x}{R}{\cdot}S{\alt}\DER{x}{S}}y \\
&\stackrel{\textrm{by \ref{thm:DERS}(\ref{thm:DERS:d})}}{\biff}&
x{+}1\DERS{\DER{x}{R}{\cdot}S}y \bor
x{+}1\DERS{\DER{x}{S}}y \\
&\stackrel{\textsc{ih}}{\biff}&
\exists z(x{+}1\DERS{\DER{x}{R}}z\DERS{S}y) \bor
x{+}1\DERS{\DER{x}{S}}y \\
&\stackrel{(x<y)}{\biff}&
\exists z(x{+}1\DERS{\DER{x}{R}}z\DERS{S}y) \bor
x\DERS{S}y \\
&\biff&
\exists z{>}x(x\DERS{R}z\DERS{S}y) \bor
x\DERS{S}y \\
&\stackrel{\IsNullable[x]{R}=\TT}{\biff}&
\exists z(x\DERS{R}z\DERS{S}y) 
\end{array}
\]

\myparagraph{Proof of \ref{thm:DERS}(\ref{thm:DERS:g}).}
Let $L=\RELoop{R}{m}{n}$ where $m>0$.
We prove the statement by proving (for all $m$ and $n$)
\[
x\DERS{L}y \biff x \DERS{R{\cdot}(L-1)} y
\]
The statement holds trivially when $m=n=1$. Assume $n>1$.

We prove the statement by induction over $\second{y}-\second{x}$.
In each induction step we also get, by using the IH, 
that $R(L-1)\equiv R((L-2)R) \equiv (R(L-2))R \equiv(L-1)R$,
where $(L-2)$ is well-defined because $n>1$.

The case $\AlwaysNullable{R}=\FF$ but $\IsNullable[x]{R}=\TT$ then
$\DER{x}{L}=\DER{x}{R{\cdot}(L-1)}$ by definition.
If $\IsNullable[x]{R}=\FF$ then 
$\DER{x}{L}=\DER{x}{R}{\cdot}(L-1)$ but then also
$\DER{x}{R{\cdot}(L-1)}=\DER{x}{R}{\cdot}(L-1)$. So, in either case
it follows that $L\equiv R{\cdot}(L-1)$ without induction.

The remaining case is $\AlwaysNullable{R}=\TT$.
The base case of $x=y$ follows directly because $x=y$ iff 
$\IsNullable[x]{R}=\TT$ iff
$\IsNullable[x]{L}$ iff $\IsNullable[x]{R{\cdot}(L-1)}$.

For the induction case Let $x<y$.
If $m>1$ then let $L' = (L-1)$ else if $m=1$, using that
$\RELoop{R}{0}{k}\equiv\RELoop{R}{1}{k}$ when
$\AlwaysNullable{R}=\TT$, let $L'=\RELoop{R}{1}{n-1}$ in order to
maintain that the lower bound remains positive. Observe that the
induction is over $\second{y}-\second{x}$ and in the induction steps
there exists $z\geq x+1$ such that $z\DERS{L'}y$ where
$\second{y}-\second{z} < \second{y}-\second{x}$.
\[
\begin{array}{rcl}
x\DERS{R{\cdot}(L-1)}y
&\biff&
x{+}1 \DERS{\DER{x}{R{\cdot}{L'}}}y
\\
&\biff& 
x{+}1 \DERS{\DER{x}{R}{\cdot}{L'}\alt \DER{x}{L'}}y
\\
&\biff&
x{+}1 \DERS{\DER{x}{R}{\cdot}{L'}}y
\bor x{+}1 \DERS{\DER{x}{L'}} y
\\
&\stackrel{\textsc{(ih)}}{\biff}&
x{+}1 \DERS{\DER{x}{R}{\cdot}(L-2){\cdot}R}y
\bor x{+}1 \DERS{\DER{x}{L'}} y
\\
&\biff& 
x{+}1 \DERS{\DER{x}{L'}{\cdot}R} y
\bor x{+}1 \DERS{\DER{x}{L'}} y
\\
&\stackrel{\AlwaysNullable{R}}{\biff}& 
x{+}1 \DERS{\DER{x}{L'}{\cdot}R} y
\\
&\biff&
x{+}1 \DERS{\DER{x}{R}{\cdot}(L-2){\cdot}R}y
\\
&\stackrel{\textsc{(ih)}}{\biff}&
x{+}1 \DERS{\DER{x}{R}{\cdot}(L')}y
\\
&\biff&
x\DERS{L} y
\end{array}
\]

\myparagraph{Proof of \ref{thm:DERS}(\ref{thm:DERS:f})}
for the case of $\AlwaysNullable{R}=\TT$ is similar to the proof of \ref{thm:DERS}(\ref{thm:DERS:g})
and observe that in this case $R\equiv R{\alt}\eps$.
Assume $\AlwaysNullable{R}=\FF$.
Then for $\IsNullable[x]{R}=\TT$ or
$\IsNullable[x]{R}=\FF$ the proof follows directly because in either case
$\DER{x}{\RELoop{R}{0}{n}}=\DER{x}{R{\cdot}\RELoop{R}{0}{n-1}}$:
\[
x\DERS{\RELoop{R}{0}{n}}y \biff
x+1\DERS{\DER{x}{R{\cdot}\RELoop{R}{0}{n-1}}}y \bor x\DERS{\eps}y \biff x\DERS{R{\cdot}\RELoop{R}{0}{n-1}\alt\eps}y
\]
where $R{\cdot}\RELoop{R}{0}{n-1}\equiv\RELoop{R}{1}{n}$ by \ref{thm:DERS}(\ref{thm:DERS:g}).
The theorem follows by the induction principle over long distances.
\end{proof}
It is possible to prove Theorem~\ref{thm:DERS} by induction over $R$ but the
proof becomes much more involved, while using
induction over location distances takes full advantage of the
definition of location derivatives.  We get the following key
characterization of $\ERE$ as a corollary of Theorem~\ref{thm:DERS}.
Let
\[
\begin{array}{rcl}
  \MatchUniverse &\eqdef& \{\pair{\loc{s}{i}}{\loc{s}{j}} \mid s\in\D^*, 0\leq i\leq j \leq |s|\} \\
  \pair{x}{y}\models R &\eqdef& x\DERS{R}y \quad \textrm{for}\;\pair{x}{y}\in\MatchUniverse\\
\LANGRE{R} &\eqdef& \{M\in\MatchUniverse\mid M\models R\}
\end{array}
\]
We say that $M\in\MatchUniverse$ is a \emph{match} of $R$ if $M\models R$,
and $\LANGRE{R}$ is called the \emph{match set} or \emph{match sematics} of $R$.
Observe that $R\equiv S \biff \LANGRE{R}=\LANGRE{S}$.
\begin{cor}
\label{cor:BA}
$\MatchAlgebra = (\MatchUniverse, \ERE, \LANGREName, \bot, \top\st, \alt, \rand, \rnot)$
is an effective Boolean algebra over $\MatchUniverse$.
\end{cor}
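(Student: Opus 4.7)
The plan is to verify each axiom of the effective Boolean algebra definition from Section~\ref{sec:preliminaries} for the structure $\MatchAlgebra$, taking $\MatchUniverse$ as the carrier, $\ERE$ as the set of predicates, $\LANGREName$ as the denotation, $\bot$ and $\top\st$ as the bottom and top elements, and $\alt,\rand,\rnot$ as the Boolean connectives. Closure of $\ERE$ under $\alt$, $\rand$, and $\rnot$ is immediate from the grammar of regexes in Section~\ref{sec:RE-def}. Unfolding the definition $\LANGRE{R}=\{M\in\MatchUniverse\mid M\models R\}$, the three distributivity identities $\LANGRE{R\alt S}=\LANGRE{R}\cup\LANGRE{S}$, $\LANGRE{R\rand S}=\LANGRE{R}\cap\LANGRE{S}$, and $\LANGRE{\rnot R}=\MatchUniverse\setminus\LANGRE{R}$ follow term-by-term from parts (\ref{thm:DERS:d}), (\ref{thm:DERS:i}), and (\ref{thm:DERS:n}) of Theorem~\ref{thm:DERS}, respectively; the complement case needs only the observation that relativizing the set-theoretic complement to $\MatchUniverse$ is compatible with the pointwise negation $\bnot(x\DERS{R}y)$ on well-formed pairs.

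For $\LANGRE{\bot}=\emptyset$, note that $\bot\in\Psi$ is a predicate with $\den{\bot}=\emptyset$. At any valid nonfinal $x=\loc{s}{i}$, Theorem~\ref{thm:DERS}(\ref{thm:DERS:b}) gives $x\DERS{\bot}y$ iff $s_i\in\emptyset$, which is impossible, while at any location $\IsNullable[x]{\bot}=\FF$ by the predicate clause of $\IsNullableName$, so $\NullE{x}{\bot}=\emptyset$ and no pair in $\MatchUniverse$ matches $\bot$.

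The main obstacle, and the only clause that takes any real work, is showing $\LANGRE{\top\st}=\MatchUniverse$. I would prove the inclusion $\MatchUniverse\subseteq\LANGRE{\top\st}$ by induction on $\second{y}-\second{x}$ for $\pair{x}{y}\in\MatchUniverse$. In the base case $x=y$, $\IsNullable[x]{\top\st}=\TT$ because the lower bound of the loop is $0$, so $x\in\NullE{x}{\top\st}$ and hence $x\DERS{\top\st}x$. For the step case $x<y$, apply parts (\ref{thm:DERS:g}) and (\ref{thm:DERS:f}) of Theorem~\ref{thm:DERS} to rewrite $\top\st\equiv\top\cdot\top\st\alt\eps$; by parts (\ref{thm:DERS:d}) and (\ref{thm:DERS:c}) it suffices to exhibit a $z$ with $x\DERS{\top}z\DERS{\top\st}y$. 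Choosing $z=x{+}1$, part (\ref{thm:DERS:b}) yields $x\DERS{\top}z$ since $\den{\top}=\D$, and the induction hypothesis applied to $\pair{z}{y}\in\MatchUniverse$, whose location distance is strictly smaller, yields $z\DERS{\top\st}y$. The reverse inclusion is trivial since every match pair is in $\MatchUniverse$ by construction. Packaging these six verifications together establishes that $\MatchAlgebra$ satisfies the axioms of an effective Boolean algebra.
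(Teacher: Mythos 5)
Your proposal is correct and follows essentially the same route as the paper: the union, intersection, and complement identities are read off from Theorem~\ref{thm:DERS}(\ref{thm:DERS:d}), (\ref{thm:DERS:i}), and (\ref{thm:DERS:n}), and the bottom and top elements are checked directly. The only difference is that the paper simply asserts $\LANGRE{\bot}=\emptyset$ and $\LANGRE{\top\st}=\MatchUniverse$ without argument, whereas you supply the (correct) justifications — the predicate clause of $\IsNullableName$ together with Theorem~\ref{thm:DERS}(\ref{thm:DERS:b}) for $\bot$, and an induction on $\second{y}-\second{x}$ via the loop-unrolling equivalences (\ref{thm:DERS:g}) and (\ref{thm:DERS:f}) for $\top\st$.
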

\begin{proof}
Recall that \emph{complement} of $X\subseteq \MatchUniverse$ in $\MatchAlgebra$
is $\COMPL{X}\eqdef\MatchUniverse\setminus X$.
It holds that $\LANGRE{\bot} = \emptyset$ and $\LANGRE{\top\st} = \MatchUniverse$.
Theorem~\ref{thm:DERS}(\ref{thm:DERS:d}) implies that
$\LANGRE{R\alt S} = \LANGRE{R}\cup\LANGRE{S}$.
Theorem~\ref{thm:DERS}(\ref{thm:DERS:i}) implies that
$\LANGRE{R\rand S} = \LANGRE{R}\cap\LANGRE{S}$.
Theorem~\ref{thm:DERS}(\ref{thm:DERS:n}) implies that
$\LANGRE{\rnot R} = \COMPL{\LANGRE{R}}$.
\end{proof}
Corollary~\ref{cor:BA} is a powerful tool that allows us to
apply laws of distributivity and de Morgan's laws in order to
rewrite regexes into various normal forms and to simplify regexes
based on other general laws of Boolean algebras as well as specific
laws of $\MatchAlgebra$ in combination with
the laws of Kleene-star and concatenation.
For example, it follows from Theorem~\ref{thm:DERS} that
$\rnot\bot \equiv \top\st$,
$\rnot\eps\equiv\top\plus$, 
$\top\st \rand R \equiv R$, and thus for example that
$\DER{x}{\rnot\eps\rand R}\equiv\DER{x}{\top\plus}\rand\DER{x}{R}
=\top\st\rand\DER{x}{R} \equiv \DER{x}{R}$.

\subsection{Reversal Theorem}

Reversal of regexes plays a key role in the top-level matching algorithm,
where it is used to locate the start location of a match backwards from the
end location (provided that the end location exists). Here our focus is on
reversal itself and we extend~\cite[Theorem~3.8]{PLDI2023}.
We make use of the following basic semantic fact of lookbacks.
\begin{lma}
  \label{lma:lb}
  Let $R\in\RE$ and $x$ be a valid location. Then
  $\IsNullable[x]{\lb{R}}\biff\exists z:z\DERS{R}x$.
\end{lma}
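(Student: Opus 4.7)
The plan is to unfold the left-hand side using the definitions of $\IsNullableName$ and $\IsMatchName$ and then appeal to a reversal invariance of the derivation relation in order to identify the resulting existential witness on the right.

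First, by the definition of $\IsNullableName$ for $\lb{R}$, $\IsNullable[x]{\lb{R}} = \IsMatch{\REV{x}}{\REV{R}}$, and by the definition of $\IsMatchName$ this unfolds to $\FME{\REV{x}}{\REV{R}} \neq \NoMatch$. A direct inspection of the recursive definitions of $\FindMatchEndName$, $\FindAllMatches{\cdot}{\cdot}$, and $\DERS{\cdot}$ shows that $\FME{y}{S} \neq \NoMatch$ holds iff some $y'$ satisfies $y \DERS{S} y'$, since $\FindMatchEndName$ is just the maximum element of $\FindAllMatches{y}{S}$ under the location order. The lemma therefore reduces to showing
\[
\exists y'(\REV{x} \DERS{\REV{R}} y') \;\biff\; \exists z(z \DERS{R} x).
\]

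Second, I would discharge this reduction via the reversal correspondence: for any regex $S$ and valid locations $u,v$, $u \DERS{\REV{S}} v \biff \REV{v} \DERS{S} \REV{u}$. Instantiating with $u = \REV{x}$ and $v = y'$, and using $\REV{\REV{x}} = x$, each witness $y'$ on the left yields a witness $z = \REV{y'}$ on the right and vice versa, completing the argument.

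The main obstacle is the reversal correspondence itself. I would prove it by an outer induction on the size of $S$ and an inner induction on $|\second{v} - \second{u}|$, commuting reversal across every connective via Theorem~\ref{thm:DERS} together with the syntactic definition of $\REV{\cdot}$, which already distributes as needed: $\REV{(R \alt S')} = \REV{R} \alt \REV{S'}$, $\REV{(R \rand S')} = \REV{R} \rand \REV{S'}$, $\REV{(R \cdot S')} = \REV{S'} \cdot \REV{R}$, and analogously for loops and complement. The only delicate step is the lookaround case: since $\REV{\la{R}} = \lb{\REV{R}}$ and $\REV{\lb{R}} = \la{\REV{R}}$, the nullability clauses for lookarounds mirror one another exactly when combined with $\REV{\REV{R}} = R$, and because $R$ is strictly smaller than the enclosing lookaround, the outer structural induction on regex size handles the mutual recursion between $\IsNullableName$ and $\IsMatchName$ without circularity.
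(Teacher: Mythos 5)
Your argument is correct, but it is worth being explicit about how it relates to the paper's structure: the paper states Lemma~\ref{lma:lb} as a ``basic semantic fact'' with no proof at all, and then \emph{uses} it inside the proof of Theorem~\ref{thm:REV} (in the lookback and lookahead induction cases). Your first step --- unfolding $\IsNullable[x]{\lb{R}}$ to $\IsMatch{\REV{x}}{\REV{R}}$, then to $\FME{\REV{x}}{\REV{R}}\neq\NoMatch$, then to $\exists y'\,(\REV{x}\DERS{\REV{R}}y')$ via $\FindMatchEndName$ being the maximum of the (valid-location-valued) set $\FindAllMatches{\REV{x}}{\REV{R}}$ --- is exactly right and matches a fact the paper itself records in the proof of Theorem~\ref{thm:Match}. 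The substance is in your second step: the reversal correspondence you need is precisely Theorem~\ref{thm:REV}(2), so citing that theorem as stated would be circular. You correctly avoid this by re-deriving the correspondence from scratch by induction on $(|S|,\,\textrm{distance})$, and your treatment of the delicate case is sound: for $S=\lb{T}$ or $S=\la{T}$ the two sides of $u\DERS{\REV{S}}v\biff\REV{v}\DERS{S}\REV{u}$ both unfold, via the definition of lookaround nullability and $\REV{(\REV{T})}=T$, to a condition of the form $\exists w\,(\REV{u}\DERS{T}w)$ (up to the IH on the strictly smaller body $T$), so the mutual recursion between $\IsNullableName$ and $\IsMatchName$ is broken by the size measure. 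The net effect is that you have folded Lemma~\ref{lma:lb} into a simultaneous induction with Theorem~\ref{thm:REV}(2), which is arguably the cleanest way to discharge the chicken-and-egg dependency the paper leaves implicit; the cost is that you essentially reprove the Reversal Theorem, whereas the paper's (implicit) intent is the reverse direction of dependency, with the lemma for $\lb{S}$ justified by the induction hypothesis of Theorem~\ref{thm:REV}(2) applied to the strictly smaller $S$.
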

\begin{thm}[Reversal]
\label{thm:REV}
For all $R\in\RE$ and valid locations $x$ and $y$:
\begin{enumerate}
\item
 $\IsNullable[x]{R} \biff  \IsNullable[\REV{x}]{\REV{R}}$
\item
 $x\DERS{R}y\biff\REV{y}\DERS{\REV{R}}\REV{x}$.
\end{enumerate}
\end{thm}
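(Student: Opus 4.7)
My plan is to establish both parts together by structural induction on $R$. The two parts are mutually dependent through lookarounds: Part~(1) on a lookaround unfolds via $\IsMatch{x}{\cdot}$ and therefore appeals to Part~(2) on the strictly smaller body, while Part~(2) on any shape of $R$ appeals to Part~(1) in the $x=y$ subcase (where the derivation relation collapses to nullability) and to Part~(2) on smaller subexpressions. Because the body of a lookaround is strictly smaller and $|\REV{R'}| = |R'|$, the mutual recursion is well-founded on structural size.

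For Part~(1), the non-lookaround cases follow directly from matching the definitions of $\IsNullable[x]{R}$ and $\IsNullable[\REV{x}]{\REV{R}}$ with the IH on immediate subexpressions. The lookahead case $R = \la{R'}$ uses $\REV{\la{R'}} = \lb{\REV{R'}}$ and then the chain
\[
\IsNullable[\REV{x}]{\lb{\REV{R'}}} \;=\; \IsMatch{\REV{\REV{x}}}{\REV{\REV{R'}}} \;=\; \IsMatch{x}{R'} \;=\; \IsNullable[x]{\la{R'}},
\]
using the involutivity of reversal on both regexes and locations. The three remaining lookaround cases are symmetric, and the negative variants additionally carry a $\bnot$ through the equivalence.

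For Part~(2), I would dispatch on the shape of $R$ using Theorem~\ref{thm:DERS} to extract a logical description of $x \DERS{R} y$ and then push the reversal inward via the IH. The $\eps$ and lookaround cases collapse to $x = y$, at which point Part~(1) supplies the nullability equivalence. The predicate case is a short index calculation from $\REV{s}_{|s|-i-1} = s_i$. The union, intersection, and complement cases are immediate from Theorem~\ref{thm:DERS}(\ref{thm:DERS:d}--\ref{thm:DERS:n}) and the IH. The concatenation case is the most revealing: starting from $\exists z(x \DERS{R_1} z \DERS{R_2} y)$, the IH rewrites each conjunct, and substituting $z' = \REV{z}$ yields $\exists z'(\REV{y} \DERS{\REV{R_2}} z' \DERS{\REV{R_1}} \REV{x})$, which is exactly $\REV{y} \DERS{\REV{R_1 \cdot R_2}} \REV{x}$ since $\REV{R_1 \cdot R_2} = \REV{R_2}\cdot\REV{R_1}$.

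The main obstacle is the loop case $R = \RELoop{R'}{m}{n}$, because $\RELoop{R'}{m-1}{n-1}$ is not a structural subexpression of $\RELoop{R'}{m}{n}$ even though $R'$ is. My plan is to layer a secondary induction on top of the outer structural induction. By Theorem~\ref{thm:DERS}(\ref{thm:DERS:g}), for $m>0$ I would rewrite $\RELoop{R'}{m}{n} \equiv R' \cdot \RELoop{R'}{m-1}{n-1}$ and invoke the already-settled concatenation argument, combining the outer IH on $R'$ with the secondary IH on the bounds; symmetrically, $\RELoop{\REV{R'}}{m}{n} \equiv \RELoop{\REV{R'}}{m-1}{n-1} \cdot \REV{R'}$ supplies the matching decomposition on the reversed side. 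The $m=0$ case reduces to $m=1$ via Theorem~\ref{thm:DERS}(\ref{thm:DERS:f}). For $n=\infty$ the bound cannot decrease, so I would instead perform the secondary induction on the location distance $\second{y} - \second{x}$, using the same decomposition to produce a loop match over a strictly smaller distance; the $\AlwaysNullable{R'}$ corner case with zero-width iterations is neutralized because $\RELoop{R'}{0}{n} \equiv \RELoop{R'}{1}{n} \alt \eps$, which rules out spurious unproductive unrollings.
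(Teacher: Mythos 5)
Your proposal is correct and follows the same overall strategy as the paper's proof: an induction on the regex refined by the loop bounds and, for the nullable-body infinite loop, by the distance $\second{y}-\second{x}$, with each case of Part~(2) dispatched through the characterizations of Theorem~\ref{thm:DERS} and the concatenation/loop cases handled by reversing the existential witness. The one place where you genuinely diverge is the lookaround case of Part~(1). The paper routes $\IsNullable[x]{\lb{S}}$ through Lemma~\ref{lma:lb} to obtain $\exists z: z\DERS{S}x$ and then applies the inductive hypothesis of Part~(2) to the body $S$; you instead observe that, because $\IsNullable[x]{\lb{S}}$ is \emph{defined} as $\IsMatch{\REV{x}}{\REV{S}}$, the equivalence $\IsNullable[x]{\la{R'}}\biff\IsNullable[\REV{x}]{\REV{\la{R'}}}$ collapses to a syntactic identity via $\REV{(\REV{R'})}=R'$ and $\REV{(\REV{x})}=x$: both sides are literally $\IsMatch{x}{R'}$. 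This is simpler and sidesteps Lemma~\ref{lma:lb}, which the paper states without proof and which is itself essentially an existential instance of Part~(2) for the body; note, though, that this contradicts your own framing that Part~(1) on a lookaround ``appeals to Part~(2) on the strictly smaller body''---as your computation shows, it does not, which only makes the well-foundedness easier. Everything else---proving (1) directly in the composite cases rather than deriving it from (2), the nested bound/distance induction for loops, and discarding unproductive iterations of a nullable loop body---matches the paper's argument, and your justification for dropping stuttering steps is no less precise than the paper's own ``we may assume stuttering steps are omitted.''
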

\begin{proof}
The proof uses Theorem~\ref{thm:DERS} and is by induction over $(|R|,\second{y}-\second{x})$
where  $(|S|,d)<(|R|,e)$ if either $|S| < |R|$ or $|S|=|R|$ and $d<e$.
For loops $|\RELoop{X}{m}{n}|<|\RELoop{X}{m'}{n'}|$ when
either $m<m'$ or else $m=m'$ and $n<n'$.
Induction uses $\second{y}-\second{x}$ only when $R=X\st$.
(Recall also that  $\RELoop{R}{0}{0}\eqdef\eps$.)
Let $x=\loc{s}{i}$ be a valid location.

The proof of (1) follows from (2) in all induction cases except when $R$ is a lookaround
where (1) is proved by using the IH of (2).

\myparagraph{Base case $R=\eps$} Follows from $\REV{\eps}=\eps$.

\myparagraph{Base case $R=\psi$} Here $x$ is \emph{nonfinal} so $x+1$ is valid. We get that
$x\DERS{\psi}y\biff\REV{y}\DERS{\REV{\psi}}\REV{x}$ by~\textsc{Thm~\ref{thm:DERS}(\ref{thm:DERS:b})}
because $x+1=y$ iff $\REV{y}+1 = \REV{x}$ and $\REV{\psi}=\psi$,
where $\REV{y}=\REV{\loc{s}{i+1}}=\loc{\REV{s}}{|s|-i-1}$ and $\REV{s}_{|s|-i-1} = s_i$,
so $s_i\in\den{\psi}$ iff $\REV{s}_{|s|-i-1}\in\den{\REV{\psi}}$.
Note also that $\IsNullable[x]{\psi}=\IsNullable[\REV{x}]{\REV{\psi}}=\IsNullable[\REV{x}]{\psi}=\FF$.

\myparagraph{Induction case $R$ is a lookback} Let $R=\lb{S}$.
\[
\IsNullable[x]{R}  \stackrel{\textsc{(Lma~\ref{lma:lb})}}{\biff}
\exists z:z\DERS{S}x \stackrel{\textsc{(ih)}}{\biff}
\exists z:\REV{x}\DERS{\REV{S}}\REV{z}
\biff
\IsMatch{\REV{x}}{\REV{S}} \biff
\IsNullable[\REV{x}]{\la{\REV{S}}} \biff
\IsNullable[\REV{x}]{\REV{R}}
\]
It follows also that $x\DERS{R}y \biff \REV{y}\DERS{\REV{R}}\REV{x}$
because $x\DERS{R}y\biff(\IsNullable[x]{R}\band x=y)$.
The proof of the negative lookback is similar because
$\IsNullable[x]{\lbneg{S}}\biff \bnot\,\IsNullable[x]{\lb{S}}$.

\myparagraph{Induction case $R$ is a lookahead} Let $R=\la{S}$.
\[
\IsNullable[x]{R} \biff
\IsMatch{x}{S} \biff
\exists z:x\DERS{S}z \stackrel{\textsc{(ih)}}{\biff}
\exists z:\REV{z}\DERS{\REV{S}}\REV{x}
\stackrel{\textsc{(Lma~\ref{lma:lb})}}{\biff}
\IsNullable[\REV{x}]{\lb{\REV{S}}} \biff
\IsNullable[\REV{x}]{\REV{R}}
\]
It follows also that $x\DERS{R}y \biff \REV{y}\DERS{\REV{R}}\REV{x}$.
The proof of the negative lookahead is similar because
$\IsNullable[x]{\laneg{S}}\biff \bnot\,\IsNullable[x]{\la{S}}$.

\myparagraph{Induction case $R=X\!{\alt}\!Y$} Then
\[
x \DERS{R} y
\stackrel{\textsc{Thm~\ref{thm:DERS}(\ref{thm:DERS:d})}}{\biff} x \DERS{X} y \bor x \DERS{Y} y
\stackrel{\textsc{(ih)}}\biff
\REV{y} \DERS{\REV{X}} \REV{x} \bor \REV{y} \DERS{\REV{Y}} \REV{x}
\stackrel{\textsc{Thm~\ref{thm:DERS}(\ref{thm:DERS:d})}}{\biff}
\REV{y} \DERS{\REV{X}{\alt}\REV{Y}} \REV{x}
\biff \REV{y} \DERS{\REV{R}} \REV{x}
\]

\myparagraph{Induction case $R= X{\rand} Y$} Then
\[
x\DERS{R}y \stackrel{\textsc{Thm~\ref{thm:DERS}(\ref{thm:DERS:i})}}{\biff}
x\DERS{X}y\band x\DERS{Y}y \eqIff
\REV{y}\DERS{\REV{X}}\REV{x}\band \REV{y}\DERS{\REV{Y}}\REV{x}
\stackrel{\textsc{Thm~\ref{thm:DERS}(\ref{thm:DERS:i})}}{\biff}
\REV{y}\DERS{\REV{X}\rand\REV{Y}}\REV{x}
\biff
\REV{y}\DERS{\REV{R}}\REV{x}
\]

\myparagraph{Induction case $R=\rnot X$} Then
\[
x\DERS{R}y \stackrel{\textsc{Thm~\ref{thm:DERS}(\ref{thm:DERS:n})}}{\biff}
\bnot(x\DERS{X}y) \eqIff
\bnot(\REV{y}\DERS{\REV{X}}\REV{x})
\stackrel{\textsc{Thm~\ref{thm:DERS}(\ref{thm:DERS:n})}}{\biff}
\REV{y}\DERS{\rnot(\REV{X})}\REV{x}
\biff
\REV{y}\DERS{\REV{R}}\REV{x}
\]

\myparagraph{Induction case $R=X{\cdot}Y$} Then
\[
x \DERS{R} y
\stackrel{\textsc{Thm~\ref{thm:DERS}(\ref{thm:DERS:c})}}{\biff} \exists z(x \DERS{X} z \DERS{Y} y)
\stackrel{\textsc{(ih)}}\biff
\exists z(\REV{y} \DERS{\REV{Y}} \REV{z} \DERS{\REV{X}} \REV{x})
\stackrel{\textsc{Thm~\ref{thm:DERS}(\ref{thm:DERS:c})}}{\biff}
\REV{y} \DERS{\REV{Y}{\cdot}\REV{X}} \REV{x}
\biff \REV{y} \DERS{\REV{R}} \REV{x}
\]

\myparagraph{Induction case $R=\RELoop{X}{m}{n}$ ($m>0$)}
Then%
\[
x\DERS{R}y
\stackrel{\textsc{Thm~\ref{thm:DERS}(\ref{thm:DERS:c},\ref{thm:DERS:g})}}{\biff}
\exists z(x\DERS{X}z\DERS{R-1}y)
\stackrel{\textsc{(ih)}}{\biff}
\exists z(\REV{y}\DERS{\REV{R}-1}\REV{z}\DERS{\REV{X}}\REV{x})
\stackrel{\textsc{Thm~\ref{thm:DERS}(\ref{thm:DERS:c},\ref{thm:DERS:g})}}{\biff}
\REV{y}\DERS{\REV{R}}\REV{x}
\]

\myparagraph{Induction case $R=\RELoop{X}{0}{n}$}
Then, by using Thm~\ref{thm:DERS}(\ref{thm:DERS:g},\ref{thm:DERS:f},\ref{thm:DERS:c},\ref{thm:DERS:a}),
\[
\begin{array}{rcl}
x\DERS{R}y &\biff& 
\exists z(x\DERS{X}z\DERS{R-1}y) \bor x\DERS{\eps}y \\
&\stackrel{\textsc{(ih)}}{\biff}&
\exists z(\REV{y}\DERS{\REV{R}-1}\REV{z}\DERS{\REV{X}}\REV{x})
\bor \REV{y}\DERS{\eps}\REV{x}
\\
&\biff&
\REV{y}\DERS{(\REV{R}-1){\cdot}\REV{X}\alt\eps}\REV{x}
\\
&\biff&
\REV{y}\DERS{\RELoop{\REV{X}}{1}{n}\alt\eps}\REV{x}
\\
&\biff&
\REV{y}\DERS{\REV{R}}\REV{x}
\end{array}
\]
where $\second{y}-\second{z} < \second{y}-\second{x}$ when
$\IsNullable[x]{X}=\FF$ and $n=\infty$, so the IH applies.
If $R=X\st$ and $\IsNullable[x]{X}=\TT$ then
the case $x=y$ is immediate. Otherwise $x<y$ and we may assume that
any ``stuttering'' steps $x\DERS{X}x$ are omitted from the derivation
because they do not contribute to completing the derivation
that is finite and bounded by $\second{y}-\second{x}$. Then
the IH applies because then $z\geq x+1$.
\end{proof}

\subsection{Anchors as Lookarounds}
\label{sec:anchors-as-lookarounds}

All anchors can be defined by lookarounds as was conjectured
in~\cite[Section~3.7]{PLDI2023}.  In our implementation all the
standard anchors are defined by using lookarounds.  Many other custom
anchors can be defined with lookarounds, e.g., to support different
line separators besides $\bslash{n}$ such as $\bslash{r}\bslash{n}$
for different file formats, or to support anchors for detecting
enclosing parentheses.

\begin{table}
  \caption{Standard anchors in regexes and
    their definitions by lookarounds relative to a valid location $\loc{s}{i}$.}
  \label{tab:anchors}
  \begin{tabular}{@{}llll@{}}
    \multicolumn{1}{@{}c@{}}{\ANCH} & Name & Definition & Effective meaning\\
    \hline
     \texttt{{\bslash{A}}} & start & $\lbneg{\top}$ & \emph{initial} location of input ($i=0$)\\
    \texttt{{\bslash{z}}} & end   &  $\laneg{\top}$ & \emph{final} location of input ($i=|s|$)\\
    \caret & start-of-line &  \texttt{({\textbackslash}A|(?<={\textbackslash}n))} & $i=0$ or $s_{i-1}=\bslash{n}$\\
     \texttt{\$} & end-of-line & \texttt{(\bslash{z}|(?={\textbackslash}n))} &  $i=|s|$ or $s_i=\bslash{n}$ \\
    \texttt{\bslash{Z}} &
     end-or-last-\texttt{\dollar} &
    \texttt{(\bslash{z}|\la{\bslash{n}${\cdot}$\bslash{z}})} &
    $i=|s|$ or $i=|s|{-}1$ and $s_i=\bslash{n}$  \\
    \texttt{{\textbackslash}b} &
    word-border & 
    $\lb{\bslash{w}}{\cdot}\laneg{\bslash{w}}{\alt}\lbneg{\bslash{w}}{\cdot}\la{\bslash{w}}$
    &
    $s_{i-1}\in\den{\CCpred{\bslash{w}}} \IFF s_i\notin\den{\CCpred{\bslash{w}}}$
    \\
    \texttt{{\bslash{B}}} &
    non-word-border & 
     $\lb{\bslash{w}}{\cdot}\la{\bslash{w}}{\alt}\lbneg{\bslash{w}}{\cdot}\laneg{\bslash{w}}$
    &
    $s_{i-1}\in\den{\CCpred{\bslash{w}}} \IFF s_i\in\den{\CCpred{\bslash{w}}}$
    \\    \hline
  \end{tabular}
\end{table}
Table \ref{tab:anchors} gives a summary of all anchors typically
supported by regular expression engines~\cite{friedl2006mastering}
represented in terms of lookarounds. Recall that
$s_{-1}=s_{|s|}=\epsilon$ and thus, for all $\psi\in\Psi$,
$s_{-1}=s_{|s|}\notin\den{\psi}$. Therefore, e.g., the definition of
$\bslash{B}$ as
$\lb{\bslash{w}}\la{\bslash{w}}{\alt}\lb{\bslash{W}}\la{\bslash{W}}$
would be \emph{incorrect} where $\den{\CCpred{\bslash{W}}}=
\den{\lnot\CCpred{\bslash{w}}}$, because for example $\bslash{B}$
is nullable in \emph{all locations} of $\texttt{"\#"}$ instead of in \emph{no locations}
if one would use $\lb{\bslash{w}}\la{\bslash{w}}{\alt}\lb{\bslash{W}}\la{\bslash{W}}$.

In other words,
$\laneg{\bslash{w}}\nequiv\la{\bslash{W}}$. In general, negative
lookarounds of predicates \emph{cannot be converted} into positive
lookarounds by delegating negation to the underlying character algebra
$\A$ by using $\lnot$ of $\A$.

It is also important to note that \emph{complement} $(\rnot)$ of
positive lookahead does not correspond to a negative lookahead (and
vice versa).  Below we show that $\rnot\bslash{b}\nequiv\bslash{B}$,
i.e. they are not complements of each other in match semantics. Thus,
negations of combinations of lookarounds must be used with caution as
their semantics needs careful analysis.  In fact,
$\rnot\bslash{b}\equiv\bslash{B}{\alt}\top\plus$, as shown below.

While it follows from definitions, for all valid locations $x$, that
$\IsNullable[x]{\laneg{R}} \biff \bnot\,\IsNullable[x]{\la{R}}$
their \emph{derivatives} are fundamentally different:
they are \emph{complements} of each other. In particular,
\[
\begin{array}{rcl}
\DER{x}{\laneg{R}} &=&\bot \\
\DER{x}{\rnot{\la{R}}} &=& \rnot\DER{x}{\la{R}} = \rnot\bot \equiv \top\st
\end{array}
\]
The same applies to lookbacks. The derivative of a lookaround is always $\bot$ because
lookarounds are context conditions for locations with
no forward progress on their own, and essentially operate as blocking conditions inside concatenations.
It follows from the definition of derivatives of concatenations that, if $\ell$ is a lookaround then
\[
\DER{x}{\ell{\cdot}R} \equiv
\ITEBrace{\IsNullable[x]{\ell}}{\DER{x}{R}}{\bot}
\]
which is how derivatives are implemented for this case, i.e., the
simplification rewrite rules are directly embedded into the abstract
definition itself, rather than being applied afterwards in a separate
step. The same applies to all similar situations involving simplification
rules for example for $\eps$, $\bot$, and $\top\st$, in intersections,
unions, and concatenations. Many of those rules are \emph{mandatory} in order
to maintain finiteness of the state space of regexes that arise during
matching, and all rewrite rules are applied as early as possible.

In order to understand the interaction of anchors with newly added
extensions better, let us take the word-border anchor
\texttt{{\textbackslash}b} and derive what the complement of it
$(\rnot\texttt{{\textbackslash}b})$ would look like.  Here we make use
of de Morgan's laws via Corollary~\ref{cor:BA} and the property that
concatenation of two lookarounds has the same semantics as the
intersection (conjunction) of those lookarounds.
Moreover, to simplify, we use laws of distributivity and the
properties that
\[
\begin{array}{rcl}
\rnot\laneg{R}\rand\rnot\la{R}&\equiv&
\top\plus
\\
\rnot\lbneg{R}\rand\rnot\lb{R}&\equiv&
\top\plus
\end{array}
\]
where $\DER{x}{\rnot\ell\rand\rnot\ell'}\equiv\top\st$ and
the combined nullability conditions become trivially false, e.g., 
\[
\IsNullable[x]{\rnot\laneg{R}\rand\rnot\la{R}}
\biff
\IsNullable[x]{\la{R}}\band \bnot\,\IsNullable[x]{\la{R}}
\]
Then
\[
\begin{array}{@{}r@{\;}c@{\;}l@{}}
  \rnot\bslash{b} &\equiv&
  \rnot(\lb{\bslash{w}}{\rand}\laneg{\bslash{w}}{\alt}\lbneg{\bslash{w}}{\rand}\la{\bslash{w}})
  \\
  &\equiv&
  (\rnot\lb{\bslash{w}}{\alt}\rnot\laneg{\bslash{w}}){\rand}
  (\rnot\lbneg{\bslash{w}}{\alt}\rnot\la{\bslash{w}})
  \\
  &\equiv&
  \rnot\lb{\bslash{w}}{\rand}\rnot\lbneg{\bslash{w}} \alt
  \rnot\lb{\bslash{w}}{\rand}\rnot\la{\bslash{w}} \alt
  \rnot\laneg{\bslash{w}}{\rand}\rnot\lbneg{\bslash{w}} \alt
  \rnot\laneg{\bslash{w}}{\rand}\rnot\la{\bslash{w}}
  \\
  &\equiv&
  \rnot\lb{\bslash{w}}{\rand}\rnot\la{\bslash{w}} \alt
  \rnot\laneg{\bslash{w}}{\rand}\rnot\lbneg{\bslash{w}} \alt
  \top\plus
  \\
  &\equiv&
  \bslash{B}\alt\top\plus
\end{array}
\]
where the last equivalence holds because, for any valid location $x$,
\[
\IsNullable[x]{\rnot\lb{\bslash{w}}{\rand}\rnot\la{\bslash{w}} \alt
  \rnot\laneg{\bslash{w}}{\rand}\rnot\lbneg{\bslash{w}}}
\IFF
\IsNullable[x]{\lbneg{\bslash{w}}{\rand}\laneg{\bslash{w}} \alt
  \la{\bslash{w}}{\rand}\lb{\bslash{w}}}
\]
that is equivalent to $\IsNullable[x]{\bslash{B}}$.
One can also show that $\rnot\bslash{B}\equiv\bslash{b}\alt\top\plus$.
Observe therefore that
\[
\rnot(\bslash{B}\alt\top\plus) \equiv \rnot\bslash{B}\rand \rnot(\top\plus)
\equiv (\bslash{b}\alt\top\plus) \rand\eps
\equiv \bslash{b}\rand\eps \equiv \bslash{b}
\]
A natural question that arises, that we are investigating,
is if there exists a general way to encode any negative lookaround
$\laneg{R}$ (or $\lbneg{R}$) by an equivalent positive lookaround
$\la{R'}$ (or $\lb{R'}$), by converting $R\in\RE$ into a regex
$R'\in\RE$ by using $\rnot$ in some manner.
A possible algorithmic advantage is that this
transformation may enable further transformations.
For example, $\laneg{R}{\cdot}\la{S}$ could then be transformed
into $\la{R'{\cdot}\top\st\rand S{\cdot}\top\st}$.

The anchor denoting the end of
previous match \texttt{\bslash{G}}~\cite{friedl2006mastering} is a non-traditional
anchor as it uses metadata instead of regular expressions. While it
can be supported by adding implementation details, we have decided to
omit it as we have not observed it being pervasively used in real life
applications.
The anchor $\bslash{a}$ in~\cite{PLDI2023} that is needed internally for reversal,
can be defined here as $\REV{(\bslash{Z})}$;
$\bslash{a}$ is not supported in the concrete
syntax of {.NET} regular expressions.

\subsection{Top-Level Match Algorithm}

The top-level match algorithm $\FindMatch{s}{R}$ takes a string
$s\in\D^*$ and a regex $R\in\RE$, and either returns $\NoMatch$ if
$\LANGRE{R}=\emptyset$, or else returns a match
$\pair{\loc{s}{i}}{\loc{s}{j}}\in \LANGRE{R}$ such that $i$ is minimal
(leftmost) and then $j$ is maximal for the given $i$ (in particular
all loops are eager).  Semantically the implementation corresponds to
the following algorithm, which therefore provides the leftmost and longest
match, also known as POSIX semantics,
\[
\begin{array}{rcl}
\FindMatch{s}{R}&\eqdef&
\blet\, x = \FindMatchEnd{\loc{\REV{s}}{0}}{\top\st{\cdot}\REV{R}}\,\bin
\ITEBrace{x=\NoMatch}{\NoMatch}{\pair{\REV{x}}{\FindMatchEnd{\REV{x}}{R}}}
\end{array}
\]
In the actual implementation the first phase initially locates the end
location $\loc{s}{j}$ by simulating a PCRE ``lazy'' loop $\top\st\lazy$
concatenated with $R$, and subsequently locates the start location
backwards from $\loc{s}{j}$ by using $\REV{R}$. It is typically
more economical to start the search from the start of the string
rather than backwards from the end of the string, although both
variants are possible.
\begin{example}
  Consider $s=\texttt{"\#\#\#abacarabacaraba\#\#"}$ and $R=\texttt{abacaraba}$.
  So $\REV{R}=\texttt{abaracaba}$ and $\REV{s}=\texttt{"\#\#abaracabaracaba\#\#\#"}$.
  Then
  $\FindMatchEnd{\loc{\REV{s}}{0}}{\top\st{\cdot}\REV{R}}=\loc{\REV{s}}{17}=\REV{\loc{s}{|s|-17}}=\REV{\loc{s}{3}}$
  and $\FindMatchEnd{\loc{s}{3}}{R}=\loc{s}{12}$.
  So $\FindMatch{s}{R}=\pair{\loc{s}{3}}{\loc{s}{12}}$. Observe that
  $\FindMatchEnd{s}{\top\st{\cdot}R}=\loc{s}{18}$ would be the end location of the \emph{second} match.
  Therefore, when starting with the end location search, the initial $\st$-loop must be
  treated or simulated as a lazy loop.

  If we instead let $R=\texttt{abacaraba\bslash{b}}$ so that the match must end a word then
  in $\REV{R}=\texttt{\bslash{b}abaracaba}$ the match must start a word. In that case
  $\FindMatchEnd{\loc{\REV{s}}{0}}{\top\st{\cdot}\REV{R}}=\loc{\REV{s}}{11}=\REV{\loc{s}{9}}$
  because \texttt{\#} is not a word-letter, and then
  $\FindMatch{s}{R}=\pair{\loc{s}{9}}{\loc{s}{18}}$.
\end{example}
The \emph{POSIX match for $R$ in $s$}, if a match exists,
is a pair $\pair{\loc{s}{i}}{\loc{s}{j}}$ where
$\loc{s}{i}$ is the minimal location in $s$ such that $\exists y:\loc{s}{i}\DERS{R}y$
and $\loc{s}{j}$ is the maximal location in $s$ such that $\loc{s}{i}\DERS{R}\loc{s}{j}$.
The following is the correctness theorem of $\FindMatchName$.
\begin{thm}[Match]
  \label{thm:Match}
  For all $s\in\D^*$ and $R\in\RE$ the following statements hold:
  \begin{enumerate}
  \item $\FindMatch{s}{R}=\NoMatch\biff \nexists i,j:\loc{s}{i}\DERS{R}\loc{s}{j}$;
  \item if $\FindMatch{s}{R}\neq\NoMatch$ then $\FindMatch{s}{R}$
    is the POSIX match for $R$ in $s$.
  \end{enumerate}
\end{thm}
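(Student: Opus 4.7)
The plan is to first establish a correctness lemma for the auxiliary function $\FindMatchEndName$, and then derive both parts of the theorem by applying that lemma twice, combined with Theorem~\ref{thm:REV}. The lemma I would prove first states that for every valid location $x$ and regex $R\in\ERE$,
\[
\FindMatchEnd{x}{R} \;=\; \max\{y : x\DERS{R}y\},
\]
with the convention $\max(\emptyset)=\NoMatch$. The proof is a direct induction on $|s|-\second{x}$: when $\IsFinal{x}$ the claim is immediate because $\FindAllMatches{x}{R}=\NullE{x}{R}$ and $\NullF{x}{R}=\max\NullE{x}{R}$; in the step case, the identity $\FindAllMatches{x}{R} = \NullE{x}{R}\cup\FindAllMatches{x{+}1}{\DER{x}{R}}$ gives $\max\FindAllMatches{x}{R} = \max(\NullF{x}{R},\,\max\FindAllMatches{x{+}1}{\DER{x}{R}})$, which aligns exactly with the recursion for $\FME{x}{R}$ and lets the IH close the case.

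For the first phase of $\FindMatchName$, let $x_0=\loc{\REV{s}}{0}$. Every valid location $z$ in $\REV{s}$ satisfies $x_0\DERS{\top\st}z$, so Theorem~\ref{thm:DERS}(\ref{thm:DERS:c}) yields $x_0\DERS{\top\st\cdot\REV{R}}y \biff \exists z:\,z\DERS{\REV{R}}y$ in $\REV{s}$. Applying Theorem~\ref{thm:REV} to the inner relation, this is equivalent to $\exists z:\,\REV{y}\DERS{R}\REV{z}$ in $s$, i.e.\ $\REV{y}$ is the start of some match of $R$ in $s$. The lemma then tells us that the first-phase result $x$ is the \emph{maximum} such $y$; since $\REV{\cdot}$ reverses the order on valid locations of $s$, its reverse $\REV{x}$ is the \emph{leftmost} start location $\loc{s}{i_{\min}}$ of any match. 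A second application of the lemma to $\FindMatchEnd{\REV{x}}{R}$ produces $\max\{j:\loc{s}{i_{\min}}\DERS{R}\loc{s}{j}\}=\loc{s}{j_{\max}}$, which is precisely the POSIX match defined just before the theorem, establishing part~(2). For part~(1), the first phase returns $\NoMatch$ iff no such $y$ exists, which by the chain of equivalences above is iff no pair $i,j$ with $\loc{s}{i}\DERS{R}\loc{s}{j}$ exists, and $\FindMatch{s}{R}=\NoMatch$ iff the first phase returned $\NoMatch$.

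The main obstacle is the double bookkeeping in the first phase: one has to use Theorem~\ref{thm:REV} to swap $y$ and $z$ under $\REV{\cdot}$ while keeping the existential quantifier in the right place, and then separately use the order-reversing property of $\REV{\cdot}$ on locations to convert ``maximum $y$ in $\REV{s}$'' into ``leftmost start $i_{\min}$ in $s$''. One must also justify that the initial $\top\st$ prefix has no effect other than allowing any valid $z$ as a witness, and handle the $\max(\emptyset)=\NoMatch$ convention uniformly (it is what glues the ``no match'' case to the same recursion). Once these points are pinned down, both parts of the theorem follow mechanically from the lemma and reversal.
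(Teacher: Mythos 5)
Your proposal is correct and follows essentially the same route as the paper's (very terse) proof: the paper likewise rests on the identity $\FindMatchEnd{x}{S}=\max\FindAllMatches{x}{S}$, Theorems~\ref{thm:DERS} and~\ref{thm:REV}, and the observation that the maximum end location of $\top\st{\cdot}\REV{R}$ in $\REV{s}$ reverses to the minimal start location of $R$ in $s$. You have simply spelled out the details that the paper leaves implicit.
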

\begin{proof}
  By using Theorems~\ref{thm:DERS} and~\ref{thm:REV}, and that
  for all $x$ and $S\in\RE$,
  $\FindMatchEnd{x}{S}=\max\FindAllMatches{x}{S}$.
  The proof also uses the fact that
  if $\REV{\loc{s}{i}}=\max\FindAllMatches{\loc{\REV{s}}{0}}{\top\st{\cdot}\REV{R}}$
  then $\loc{s}{i}$ is the \emph{minimal} location in $s$ such that
  $\loc{s}{i}\DERS{R{\cdot}\top\st}\loc{s}{|s|}$.
\end{proof}

\section{Implementation}
\label{sec:implementation}

Here we give a high level overview of how the engine was implemented, what parts are 
shared with the rest of .NET and the nonbacktracking engine, and some key differences 
and implementation details.

The core parser was taken directly from the .NET runtime, but was
modified to read the symbols \texttt{\&} and \texttt{$\sim$} as
intersection and negation respectively. Fortunately the escaped
variants `\texttt{\bslash{}\&}' and `\texttt{\bslash{}$\sim$}' were
not assigned to any regex construct, and all existing regex patterns
can be used by escaping these two characters.

Another important component we directly took from the .NET runtime is
the alphabet compression of the nonbacktracking engine that builds the
minimal number of predicates on the underlying character set by
applying appropriate Boolean combinations.  For example, all input
characters in the the regex \texttt{ab} can be represented with just 3
predicates: $a$, $b$ and $[{\caret}ab]$. It
is an essential preprocessing step for such a large alphabet (16-bit
Unicode Basic Multilingual Plane in our case), which allows us to
represent the input predicates as bits, and then use bitwise
operations to check if a character is in the set of characters
represented by a predicate.

The engine itself is implemented in mostly high-level F\#, without
caching any derivatives, which does not give it the optimal
performance characteristics as of at the time of writing. The aim is
rather to provide a working prototype that, despite having much higher
number of memory allocations than, e.g., the nonbacktracking or
backtracking alternatives of .NET 7, is still capable of solving
problems that neither the backtracking nor nonbacktracking variant
can currently solve.

To make the modifications possible, the
\texttt{System.Text.RegularExpressions} library was copied to another
namespace and the visibility of the library was changed from internal
to public. The only other meaningful modification to the library is
the extension of the parser to parse negation and intersection symbols
appropriately, as described above.

The concatenation and $\epsilon$-nodes were implemented as just a
single-linked list of other regex nodes, as it allows to traverse to
next one using just a tail pointer, without needing any other
steps. There is never any need to traverse the concatenation backwards,
thus allowing to take advantage of what the concatenation
fundamentally is -- a single linked list. And $\eps$ is
just an empty concatenation, i.e. empty list.

\subsection{Rewrite Rules and Subsumption}
\label{sec:rewrites}

Our system implements a number of regex rewrite rules, which are
essential for the efficiency of the implementation.
Figure~\ref{fig:rw} illustrates the basic rewrite rules that are
always applied when regular expressions are constructed. Intersection
and union are implemented as commutative, associative and idempotent
operators, so changing the order of their arguments does not change the result.

There are many further derived rules that can be beneficial in
reducing the state space.  Unions and intersections are both
implemented by sets.  If a union contains a regex $S$, such as a
predicate $\psi$, that is trivially subsumed by another regex $R$,
such as $\psi\st$, then $S$ is removed from the union. This is an
instance of the loop rule in Figure~\ref{fig:rw} that rewrites
$\RELoop{\psi}{0}{\infty}{\alt}\RELoop{\psi}{1}{1}$ to
$\RELoop{\psi}{0}{\infty}$.

A further simplification rule for unions is that if a union contains
a regex $\psi\st$ and all the other alternatives only refer to
elements from $\den{\psi}$ then the union reduces to $\psi\st$.  This
rule rewrites any union such as (${.\st}\texttt{ab}.\st\alt .\st$) to
just $.\st$ (recall that $.\equiv\texttt{[\caret\bslash{n}]}$), which
significantly reduces the number of alternations in total.  Such
rewrite could potentially be extended even further, e.g., rewriting
($.\st\texttt{ab}.\st\alt.\st\texttt{b}.\st$) to $.\st\texttt{b}.\st$,
but the detection overhead is more complex for rewrites like this
and has not been evaluated yet.

\begin{figure}
\[
   \inferrule
   {\rnot(\top\st)}
   {\bot}
\quad
   \inferrule
   {\rnot\bot}
   {\top\st}
\quad
   \inferrule
   {\rnot\rnot R}
   {R}
\quad
   \inferrule
   {\rnot\eps}
   {\top\plus}
\quad
   \inferrule
   {\rnot(\top\plus)}
   {\eps}
\quad
   \inferrule
   {\bot{\cdot}R}
   {\bot}
\quad
   \inferrule
   {R{\cdot}\bot}
   {\bot}
\quad
   \inferrule
   {\eps{\cdot}R}
   {R}
\quad
   \inferrule
   {R{\cdot}\eps}
   {R}
\quad
   \inferrule
   {\bot\st}
   {\eps}
\]
\[
\quad
   \inferrule*
   {\top\st \alt R}
   {\top\st}
\quad
   \inferrule*
   {\top\st \rand R}
   {R}
\quad
   \inferrule*
   {\bot \alt R}
   {R}
\quad
   \inferrule*
   {\bot \rand R}
   {\bot}
\quad
   \inferrule*
   {\RELoop{R}{1}{1}}
   {R}
\quad
   \inferrule*
   {\RELoop{R}{0}{0}}
   {\eps}
\]
\[
   \inferrule*[right=${(\IsNullable[x]{R})}$]
   {\eps\rand R}
   {R}
\quad
   \inferrule*[right=${(\bnot\,\IsNullable[x]{R})}$]
   {\eps\rand R}
   {\bot}
\quad
   \inferrule*[right=${(l\leq k\leq m)}$]
   {\RELoop{R}{l}{m}\alt\RELoop{R}{k}{n}}
   {\RELoop{R}{l}{\max(m,n)}}
\]
\caption{Basic rewrite rules.\label{fig:rw}}
\end{figure}

\subsection{POSIX Semantics}

One key difference from the .NET 7 nonbacktracking engine, is that we
treat both alternations and intersections as sets, which makes the
semantics different from the backtracking one.  For example, matching
the regular expression \texttt{(a|ab)*} on the string
``\texttt{abab}'' with the current .NET engines results in the prefix
``\texttt{a}'', as the leftmost alternation finds a successful match. Our
engine will match the whole string ``\texttt{abab}'', as all the
alternations are explored in parallel, and the longest match is in the
right alternation. The behavior of $\IsMatch{x}{R}$ is identical in both
engines.

In the nonbacktracking engine, PCRE semantics is achieved by
prepending $\top\st\lazy$ to the regex pattern, which is always
nullable and has only two potential derivatives.  For example, in the
case of the pattern $\top\st\lazy\texttt{ab}$, the derivative can be either
$\texttt{b}{\alt}\top\st\lazy\texttt{ab}$, when the character matches $\CCpred{a}$ (the initial
pattern creates a new alternation) , or $\top\st\lazy\texttt{ab}$, (i.e. just the initial
pattern) when it does not.

In our case we meet POSIX semantics by keeping the initial pattern
as is, but starting a potential match on every input position, 
that has a valid starting predicate. Then we keep track of the nullability of all
alternations the initial pattern produced separately. This is done by
having a set of "top-level alternation" data structures, each of which
consist of a regex and the maximum nullable position of the
alternation.  Once an alternation turns into \texttt{$\bot$}, we check if the alternation has a
nullable position, and if it does, we return the match. If the alternation
does not have a nullable position, we remove it from the set and
continue matching the remaining alternations.

\subsection{Conjunctive Startset Lookup and Skipping}
\label{sec:startset-lookup}

One very important optimization that allows to avoid doing a lot of
unnecessary work on average, depending on the input string and regular
expression, is the startset optimization, which enables search for
characters belonging to the set of initial characters of a regex
denoted by the \textit{startset} predicate by using dedicated string
traversal constructs which can be several orders of magnitude faster
than checking characters one-by-one.

The result of evaluating the predicate \textit{startset} is captured in the
state of the regex node. Startset optimization
is often comprised of checking if the first character of the regex
pattern is present in the input string.  As mentioned in Section
\ref{ex:paragraph-matching}, we make use of the fact that determining
the startset of a regex with derivatives is a relatively cheap
operation, and we can use it to incorporate efficient vectorized
string-search procedures into the matching algorithm.

A key difference from other regex engines is that we don't just skip
until the initial startset of the regex, but perform intermediate
startset computations and appropriate skipping in the inner loop as
well.  To take advantage of the fast \textit{startset} check, we first need
to determine if a regex is ``skippable'' or not, where we determine if
taking the derivative changes the node on only a
subset of the input alphabet. Such regexes start with a
\texttt{$\st$}-loop.

A skippable regex is defined as a regex that starts with a *-loop, or one in which
all child regexes recursively start with a *-loop, such as an alternation, lookahead,
intersection or negation of regexes starting only with *-loops. Lookbacks are more complex
and do not currently take advantage of this optimization. The key insight of this 
skippable check is that *-loops keep producing the same regex derivative
until either the loop terminator or tail of one of the *-loops matches.

For example, all regexes starting with $\top\st$ are skippable, as according to the
derivation rules in Section ~\ref{sec:derivatives}:
the derivative of the regex does not change unless the tail of the concatenation produces a
non-$\bot$ derivative. The same is true for all regexes starting with $.\st$, as
the only two predicates that produce a non-initial derivative are
$\CCpred{\bslash{n}}$, that is the loop termination predicate for \texttt{.*}, 
and the starting predicate of the concatenation tail. For example, the regex
\texttt{.\st{}ab} has two predicates that change the state:
$\CCpred{\bslash{}n}$, in which case the regex becomes $\bot$, and
$\CCpred{a}$, in which case the regex becomes \texttt{b|.*ab}. All other
predicates produce the initial regex.

Such skipping becomes useful when we want to match over input text with
multiple intersections or alternations. For example, consider the regex \texttt{(12.*)\&(.*{\textbackslash}d)},
which is strictly non-skippable, as the transition from \texttt{12.*} to \texttt{$\bot$} or to \texttt{2.*} 
is an important part of the matching process. However, the regex \texttt{(.*12.*)\&(.*{\textbackslash}d)} is 
skippable, allowing us to look up the occurrence of a character satisfying the
predicate $\CCpred{\texttt{[{\textbackslash}d{\textbackslash}n]}}$ (the disjunction of 
predicates \texttt{1}, \texttt{{\textbackslash}d} and the loop exit condition \texttt{{\textbackslash}n}),
and continue matching directly at that position.

The in-match startset computation heuristic itself is very simple, and only considers the first node
of the regex, or the first node and the second node recursively in the case of a star. In the 
case of a regex starting with a predicate, we take the predicate itself.
In the case of a predicate star loop, we take the negation of the loop body predicate, 
and union it with the concatenation tail predicate, as done in the previous example. 
In the case of an intersection,
alternation or complement, we take the union of the startsets of the contained regex bodies.
Notably, the startset of a complement is not the complement of the startset of the contained regex, but
the same startset, as the derivation rules apply the same way inside negation.

After computing the joint startset of the regex, we use it in a vectorized index-of
lookup in the following input string. Currently, we perform this lookup in an overestimated manner,
by taking the union of all startsets, but this can be improved upon by taking the intersection in 
some cases. For example, if we have an intersection with the startsets $\CCpred{\bslash{}d}$ and $\CCpred{1}$ 
strictly in the same position, then we know ahead-of-time that the startset of the intersection is constrained 
to only $\CCpred{1}$ - the conjunction of the two startsets.

The startset optimization is very powerful as it allows us to skip over large parts of the input string
in a single step, and only perform the more expensive derivative computation on the remaining
positions. A future improvement would be to extend this to multi-character predicates and intersections, 
which would allow us to skip over even more of the input string.

\section{Evaluation}
\label{sec:evaluation}

The evaluation of the engine is still in its early stages, and we 
have not yet implemented all the optimizations that are possible,
such as caching transition regexes, or skipping over multi-character predicates.
However, despite being only a research prototype, we can already see that the engine 
is capable of solving problems that neither of the available .NET regex engines can 
currently solve. 

An industrial implementation of the engine should reach performance characteristics
comparable to \cite{PLDI2023} over standard regexes, because the same .NET framework is used, 
with potential marginal gains in space/time due to commutative \texttt{|}, as
non-commutativity of \texttt{|} disallows some useful rewrites in \cite{PLDI2023}.
Moreover, all initial fixed-prefix/suffix-search optimizations are shared across all backends. 
For a more comprehensive evaluation on standard regexes outside .NET, see the evaluation of 
\cite{PLDI2023} directly.

Note that we do not use the caching mechanism from the nonbacktracking engine 
in our prototype,
as it is not yet fully implemented, and would require a significant amount of work to integrate in
the presence of lookarounds. The lack of caching is the main reason why our engine is not yet
competitive with the nonbacktracking engine on the benchmarks.

We have evaluated the performance of our engine against the .NET default 
backtracking engine, and the symbolic automata based nonbacktracking version.
The benchmarks were run on a machine with an AMD Ryzen 7 5800X 8-Core CPU, and
32 GB of RAM. The benchmarks were run on .NET version 7.0.305.

We compare the performance of extracting paragraphs containing multiple substrings
from the collected works of Mark Twain \cite{twain}, first we compare the performance on a 
9 kB string, containing 34 paragraphs, extracted from the lines
188589 to 188771, and then on the entire
20 MB file, containing 379897 lines in total.

The paragraphs are extracted with three kinds of patterns: one that uses a negative 
lookahead to match the end of the paragraph, one that uses a line loop until the occurrence
of two sequential newlines, and one that uses negation to constrain the paragraph range 
and intersections to constrain the paragraph contents.

Examples of the patterns used to match paragraphs containing the word "King":
\begin{enumerate}
    \item Negative lookahead: \verb|\n\n((?!\n\n)[\s\S])*?(King)((?!\n\n)[\s\S])*?\n\n|
    \item Loop: \verb|\n\n((.+\n)+?(.*King.*\n)(.+\n)+?)\n|
    \item Conjunction, negation: \verb|\n\n~([\s\S]*\n\n[\s\S]*)\n&[\s\S]*King[\s\S]*|
\end{enumerate}

Note that using a simpler pattern, such as
\verb|\n\n[\s\S]*?King[\s\S]*?\n\n|, would not work, as any non-match inside the
paragraph would cause \verb|[\s\S]*?| to leak outside the current paragraph, 
and match over multiple paragraphs.

The substrings we look for in the paragraph are "King", "Paris", "English", "would", "rise", "struck", "council",
"march", "war", "May", "Orleans" and "work", where we introduce the next substring, as the number of substrings increases. 
In the case of 1 substring we use "King", for 2 substrings we look for "King" and "Paris" in the same paragraph, etc.

\begin{figure}
  \includegraphics[scale=1.0]{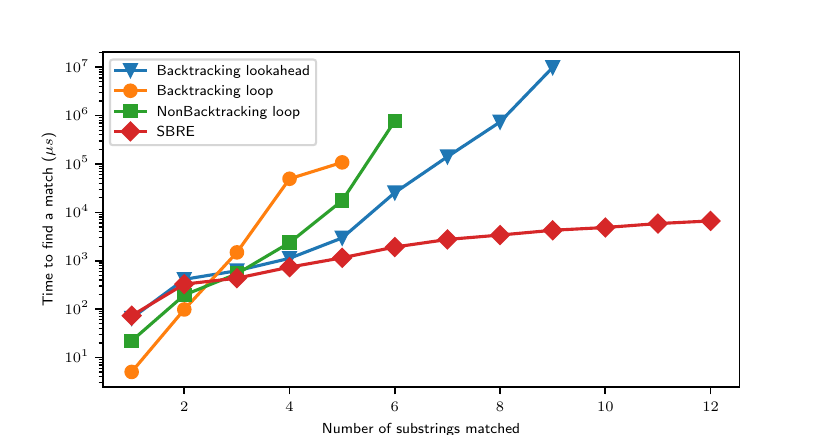}
  \caption{Time taken to find a matching paragraph containing all required substrings in any order in a 9kB sample text.}
  \label{fig:parawords}
\end{figure}

The results of an experiment run on a 9 kB string where the first paragraph matching the number of substrings indicated on the x axis is given in Fig. \ref{fig:parawords}. As the y axis is logarithmic it can be seen how using the intersections produces more efficient match as there is no blowup in match orderings required in conventional regular expressions.

In the 9 kB sample text, our engine, SBRE \cite{sbre_zenodo} starts becoming faster than the rest at 3 substrings, 
but due to lack of caching has 4 times higher memory allocations than the nonbacktracking engine and about 9 times higher 
allocations than the backtracking implementation. 

After a certain number of substrings, the order of the matchings (Table ~\ref{tab:pattern-blowup}) becomes so large that both
the backtracking and nonbacktracking engines stop being able to find a match within 60 seconds, while our engine still manages
to complete the match with minor slowdowns.

\begin{table}
    \caption{Paragraph extraction pattern lengths in characters}
    \label{tab:pattern-blowup}
    \begin{tabular}{p{1.0in}p{1.3in}p{1.0in}p{1.0in}}
      n of substrings & Neg. lookahead & Loop & \& and $\sim$ \\
      \hline
      1 & 50    & 34        & 46 \\
      2 & 101   & 77        & 66 \\
      3 & 363   &  295      & 88 \\
      4 & 1869  & 1513      & 108 \\
      5 & 11805 & 9385      & 127 \\
      6 & 87885 & 69145     & 148 \\
      7 & 740925 & 579625   & 170 \\
      8 & 6854445 & 5322265     & 190 \\
      9 & 69310125 & 53343385   & 208 \\
      10 & 769305645 & 587865625    & 226 \\
      \hline
    \end{tabular}
\end{table}

In the full 20 MB text, our research prototype (SBRE) takes significantly more unnecessary 
derivatives than the nonbacktracking variant, as it suffers from the lack of 
caching,
but after a certain number of substrings, it is still the only engine that can complete the match
within 60 seconds. The results are shown in Fig. \ref{fig:fulltextwords}.

A notable observation in the full text is that the lookahead pattern of the backtracking engine falls
off significantly earlier than in the 9 kB text, where it showed promising results up to
6 substrings. In the full text, the nonbacktracking engine shows the best performance up to 6 substrings,
but then falls off completely, as the memory allocated blows up from 74.41 MB in 5 paragraphs to 
3.39 GB in 6 paragraphs. The nonbacktracking engine was not able to produce a result with 
7 substrings within 10 minutes.

\begin{figure}
    \includegraphics[scale=1.0]{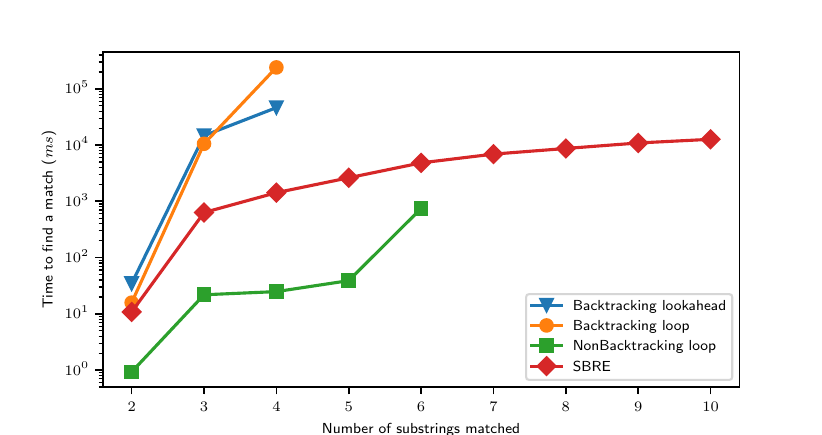}
    \caption{Time taken to find a matching paragraph containing all required substrings in any order in the full 20MB sample text.}
    \label{fig:fulltextwords}
\end{figure}
  
The results show that our engine is able to search for a large amount of substrings in a paragraph
efficiently, while the other engines suffer from the exponential blowup in match orderings.
In the full text, adding a substring to the SBRE pattern after 2 substrings currently causes a constant memory 
allocation growth of 800 MB, which is a significant amount of memory, but still allows for the match to complete,
while the other engines are not able to complete the match at all. The memory usage of the SBRE
engine could be significantly reduced by caching the derivatives, which would make the
engine more competitive overall.

\section{Related Work}
\label{sec:related}
Regular expressions have in practice many extensions, such as
\emph{backreferences} and \emph{balancing groups}, that reach far
beyond \emph{regular} languages in their expressive power. Such
extensions, see~\cite{LMK19}, fall outside the scope of this
paper. The focus on related work here is solely on automata and
derivative based matching algorithms, tools, and techniques, that in
some form or shape maintain, at least in principle, a finite state
based view corresponding to regular languages.  In particular,
\emph{lookaheads} maintain regularity~\cite{Morihata12} and regular
expressions with lookaheads can be converted to Boolean
automata~\cite{Berglund21}.  \cite{DBLP:journals/ieicetd/ChidaT23}
consider extended regular expressions in the context of backreferences
and lookaheads. They build on \cite{DBLP:conf/lata/CarleN09} to show
that extended regular expressions involving backreferences and both
positive and negative lookaheads leads to \emph{undecidable}
emptiness, but, when restricted to positive lookaheads only is closed
under complement and intersection.  \cite{Miyazaki2019} present an
approach to find match end with derivatives in regular expressions
with lookaheads. The semantics of derivatives in~\cite{Miyazaki2019}
uses \emph{Kleene algebras with lookahead} as an extension of Kleene
algebras with tests~\cite{Kozen97}, and is fundamentally different
from our formulation in several aspects, where the underlying semantic
concatenation is \emph{commutative} and \emph{idempotent} and where
the difference to derivatives of concatenations in relation
to~\cite{Brz64} is also pointed out, which also leaves unclear the
question of supporting lookbacks and reverse.  Derivatives in
combination of Kleene algebras are also studied in~\cite{Pous15}.  Our
approach is a conservative extension of the theory in~\cite{PLDI2023}
with intersection and complement, as well as positive and negative
lookbacks and lookaheads, that builds on~\cite{Brz64}, leading to the
core fundamental results of Theorem~\ref{thm:DERS} and
Theorem~\ref{thm:REV}.  

In functional programming derivatives were studied
in~\cite{Fis10,Owens09} for {\IsMatchName}.  \cite{SuLu12,ADU16} study
{\FindMatchEndName} with Antimirov derivatives and POSIX semantics and
also Brzozowski derivatives in \cite{ADU16} with a formalization in
Isabelle/HOL. The algorithm of \cite{SuLu12} has been recently further
studied in~\cite{TanUrban23} as a recursive functional program and
also formalized in Isabelle/HOL.  The fundamental difference to the
theory here is that, because of lookarounds (that in particular enable
the definition anchors) imply that certain classical laws, such as several of the
\emph{inhabitation relation rules} in~\cite{TanUrban23}, become
invalid.  In~\cite{Wing19} anchors are considered as special symbols
in an extended alphabet, using classical derivatives.  This approach
has the drawback that anchors are intended as specialized lookarounds
and treating them as special symbols conflicts with match semantics in
terms of locations.  Some aspects of our work here, such as support
for intersection, are related to SRM~\cite{Vea19} that is the
predecessor of the \textsc{NonBacktracking} regex backend of
{.NET}~\cite{PLDI2023}, but SRM lacks support for lookarounds as well
as anchors. The top-level matcher of SRM is also different and more
costly because it uses three passes over the input to locate a match,
instead of two.

State-of-the-art nonbacktracking regular expression matchers based on
automata such as RE2~\cite{Cox10} and grep~\cite{grep} using variants
of~\cite{Thom68}, and Hyperscan~\cite{HyperscanUsenix19} using a variant
of~\cite{Glu61}, as well as the derivative based
\textsc{NonBacktracking} engine in {.NET} make heavy use of
\emph{state graph memoization}.  None of these engines currently
support lookarounds intersection or complement.  An advantage of using
derivatives is that they often minimize the state graph (but do not guarantee
minimization), as was
already shown in~\cite[Table~1]{Owens09} for DFAs. Further evidence of
this is also provided in~\cite[Section~5.4]{SuLu12} where NFA sizes
are compared for Thompson's and Glushkov's, versus Antimirov's
constructions, showing that Antimirov's construction consistently
yields a smaller state graph.  In automata-based engines an upfront
DFA minimization is undesirable because it is too costly, while
derivatives allow DFA-minimizing optimizations to be applied
essentially on-the-fly.  In general, the rewrite rules applied in our
framework are not feasible in automata-caching, because the
corresponding semantic language-level checks would require global analysis, 
because in traditional automata based representations one has lost the 
relationship between states and regular expressions. In our case we preserve
 the relationship between regular expressions

The two phases of the top-level matching algorithm: to find the match
end location and to find the match start location are similar in
RE2~\cite{Cox10} as well as in \textsc{.NET NonBacktracking} which our
implementation builds on top of.  It therefore also benefits from
switching to NFA mode when a certain threshold of DFA states is
reached, having an overall effect similar to that of RE2~\cite{Cox10},
but the derivatives can switch to Antimirov-style derivatives without
any prior bookkeeping. The top-level loop is in some sense oblivious
to the fact that intersection, complement, and lookarounds are being
used inside the regular expressions.

The two main standards for matching are PCRE (backtracking semantics)
and POSIX~\cite{Lau2000,Berg21}.  \emph{Greedy} matching algorithm for
backtracking semantics was originally introduced in~\cite{FC04}, based
on $\epsilon$-NFAs, while maintaining matches for eager loops. We
should point out that \cite[Proposition~2]{FC04} assumes the axiom
$L(R{\cdot}S)=L(R){\cdot}L(S)$ that fails with anchors or lookarounds.
While backtracking semantics is needed in {.NET} for compatibility
across all the backends -- including
\textsc{NonBacktracking}~\cite{PLDI2023} -- here we use POSIX
semantics that allows us to treat alternations and intersections as
commutative operations. The rationale behind using POSIX is that it is
semantically unclear as to what de Morgan's laws and laws of
distributivity would mean in the context of backtracking semantics if
intersection would be treated as a noncommutative operation.

The results \cite[Theorem~3.3 and Theorem~3.8]{PLDI2023} form a
\emph{strict subset} of our Theorem~\ref{thm:DERS} and
Theorem~\ref{thm:REV} whose proofs are novel and nonobvious because definitions of
nullability and derivatives are mutually recursive. It was far from
obvious if regex complement and intersection could even be combined in
any meaningful way with lookarounds.  It remains unclear to us, as to
if the main result~\cite[Theorem~4.5]{PLDI2023} that builds on
formally linking the semantics of derivatives to \emph{backtracking}
(PCRE) semantics can be extended to cover the extended fragment of
regexes defined here by $\RE$ due to noncommutativity of alternations
in backtracking semantics, the proof of~\cite[Theorem~4.5]{PLDI2023}
is much more complicated than that of Theorem~\ref{thm:Match}.
Intersection was also included as an
experimental feature in the initial version of SRM~\cite{Vea19} by
building directly on derivatives in~\cite{Brz64}, and used an encoding
via regular expression \emph{conditionals} (if-then-else) that
unfortunately conflicts with the intended semantics of conditionals
and therefore has, to the best of our knowledge, never been used or
evaluated.

The conciseness of using intersection and complement in regular
expressions is also demonstrated in~\cite{Gelade2012}
where the authors show that using intersection and complement in
regular expressions can lead to a double exponentially more succinct 
representation of regular expressions.

\section{Future Work}
\label{sec:future}

The theory of derivatives based on locations that is developed here
can be used to extend regular expressions with lookarounds in SMT
solvers that support derivative based lazy exploration of regular
expressions as part of the sequence theory, such solvers are
CVC4~\cite{CVC4,CVC4deriv} and Z3~\cite{BM08,SVB21}.  A further
extension is to lift the definition of location derivatives to a fully
\emph{symbolic} form as is done with \emph{transition regexes} in
Z3~\cite{SVB21}. \cite{10.1145/3498707} mention that the OSTRICH
string constraint solver could be extended with backreferences and
lookaheads by some form of alternating variants of prioritized
streaming string transducers (PSSTs), but it has, to our knowledge,
not been done.  Such extensions would widen the scope of analysis of
string verification problems that arise from applications that involve
regexes using anchors and lookarounds.  It would then also be
beneficial to extend the SMT-LIB~\cite{SMTLIB} format to support
lookarounds.

Counters are a well-known Achilles heel
of essentially all nonbacktracking state-of-the-art regular expression
matching engines as recently also demonstrated in~\cite{THHLVV22}, which makes any
algorithmic improvements of handling counters highly valuable.
In~\cite{CsA20}, Antimirov-style derivatives~\cite{Ant95} are used to
extend NFAs with counting to provide a more succinct symbolic
representation of states by grouping states that have similar behavior
for different stages of counter values together using a data-structure
called a \emph{counting-set}.  It is an intriguing open problem to
investigate if this technique can be adapted to work with location
derivatives within our current framework.  
\cite{DBLP:journals/pacmpl/GlaunecKM23} point out that it is important
to optimize specific steps of regular expression matching to address
particular performance bottlenecks. The specific BVA-Scan algorithm is
aimed at finding matches with regular expressions containing counters
more efficient. \cite{DBLP:conf/fossacs/HolikSTV23} report on a subset
of regexes with counters called synchronizing regexes that allow for fast
matching.

Further work of improved rewrite rules is also needed in our current
implementation to reduce the number of states that arise in the
matching engine, as well as enhancing caching techniques that record
transitions that have already been seen for lookarounds. Currently, no
special handling of transitions arising from lookarounds is taking
place. And, as intersections allow precise text extraction in a single
pass, the approach can be optimized further by application of
vectorization.

One of the main drawbacks of lookarounds is that they are not
cacheable in the same way as other regexes, as they depend on the
position of the matcher in the string. This means that the regex
engine cannot statically determine if a lookaround will match at a
given position and has to try the lookaround at every position.

This is not a problem for the short context-lookups, like the
anchor lookarounds in Table \ref{tab:anchors}, as
they do not move around in the string, and retain the linear time
complexity of the regex. However, this is a problem for the other
lookarounds, such as the lookback \texttt{(?<=a.*)b}, where the
occurrence of the character \texttt{a} has to be re-checked at every
occurrence of \texttt{b}.  This leads to a non-ideal quadratic time complexity.

However, these kinds of lookarounds could be cached by storing the
predicate $\DOTpred=\CCpred{\texttt{[\caret\bslash{n}]}}$ and
passing it along with the derivative. This way, the regex engine will
only invalidate the cache upon finding an occurrence of a character
that does not match the loop predicate, i.e., in the case of \texttt{(?<=a.*)}
this cached predicate will be invalidated by $\bslash{n}$, which is the only
character not in $\den{\DOTpred}$. This caching technique could lead
to a linear time complexity for many lookarounds but needs more
research.

One notable feature that is missing from our regexes is support for
laziness. However, this feature can often be emulated by using
complement and negation of character classes. For example, the regex
\texttt{a.*?b}, which matches the shortest string in the range of
\texttt{a} and \texttt{b}, can be emulated by
\texttt{a[\caret{}b\bslash{n}]*b}, which has the exact same semantics,
but without using laziness. Another way to emulate laziness is by
using intersection and complement, as in
\texttt{(a.*)\&(\rnot(.*b.*)b)}, which is also equivalent
\texttt{a.*?b}.  Note that \texttt{\rnot(.*b.*)}, that means ``does
not contain \texttt{b}'', matches \emph{before} \texttt{b} where the
final match character must be \texttt{b} in \texttt{(\rnot(.*b.*)b)}.

\section{Conclusion}
We have presented both a theory and implementation for a combination
of extensions to regular expressions including complement,
intersection and positive and negative lookarounds that have not
previously been explored in depth in such a combination.  Prior work
has analyzed different other sets of extensions and their properties,
but several such combinations veer out of the scope of regular
languages. The work combines and extends the symbolic derivatives
based approaches presented in \cite{Vea19} and \cite{PLDI2023} by
showing how the carefully selected set of extensions to regular
expressions yield an effective Boolean algebra on the match set
semantics, producing a regular language with interesting new
applications and opening up a principled way to defining rewrite rules
for optimizations that play an important role in practical
applications. In addition, we have provided a precisely defined
approach to finding matches using symbolic derivatives.

When considering context-sensitive anchors, we showed how anchors can
be represented in terms of lookarounds and are thus supported by our
approach. Moreover, such generalization provides possibilities for
defining custom anchors.

The implementation reuses several components of the .NET 7
nonbacktracking regular expression backend and adds support for newly
introduced extensions. To demonstrate the efficacy of the approach we
showed that the task of locating a substring containing a set of
matches in arbitrary order can be solved by the proposed engine while
neither the nonbacktracking nor backtracking engines of .NET 7 scaled
due to the factorial blowup of the possible orderings of the matches.
 
\bibliographystyle{ACM-Reference-Format}
\bibliography{bib}


\begin{thebibliography}{42}


\ifx \showCODEN    \undefined \def \showCODEN     #1{\unskip}     \fi
\ifx \showDOI      \undefined \def \showDOI       #1{#1}\fi
\ifx \showISBNx    \undefined \def \showISBNx     #1{\unskip}     \fi
\ifx \showISBNxiii \undefined \def \showISBNxiii  #1{\unskip}     \fi
\ifx \showISSN     \undefined \def \showISSN      #1{\unskip}     \fi
\ifx \showLCCN     \undefined \def \showLCCN      #1{\unskip}     \fi
\ifx \shownote     \undefined \def \shownote      #1{#1}          \fi
\ifx \showarticletitle \undefined \def \showarticletitle #1{#1}   \fi
\ifx \showURL      \undefined \def \showURL       {\relax}        \fi
\providecommand\bibfield[2]{#2}
\providecommand\bibinfo[2]{#2}
\providecommand\natexlab[1]{#1}
\providecommand\showeprint[2][]{arXiv:#2}

\bibitem[Antimirov(1996)]%
        {Ant95}
\bibfield{author}{\bibinfo{person}{Valentin Antimirov}.}
  \bibinfo{year}{1996}\natexlab{}.
\newblock \showarticletitle{Partial Derivatives of Regular Expressions and
  Finite Automata Constructions}.
\newblock \bibinfo{journal}{\emph{Theoretical Computer Science}}
  \bibinfo{volume}{155} (\bibinfo{year}{1996}), \bibinfo{pages}{291--319}.
\newblock
\urldef\tempurl%
\url{https://doi.org/10.1007/3-540-59042-0\_96}
\showDOI{\tempurl}


\bibitem[Ausaf et~al\mbox{.}(2016)]%
        {ADU16}
\bibfield{author}{\bibinfo{person}{Fahad Ausaf}, \bibinfo{person}{Roy
  Dyckhoff}, {and} \bibinfo{person}{Christian Urban}.}
  \bibinfo{year}{2016}\natexlab{}.
\newblock \showarticletitle{{POSIX} Lexing with Derivatives of Regular
  Expressions (Proof Pearl)}. In \bibinfo{booktitle}{\emph{Interactive Theorem
  Proving}} \emph{(\bibinfo{series}{LNCS}, Vol.~\bibinfo{volume}{9807})},
  \bibfield{editor}{\bibinfo{person}{Jasmin~Christian Blanchette} {and}
  \bibinfo{person}{Stephan Merz}} (Eds.). \bibinfo{publisher}{Springer},
  \bibinfo{pages}{69--86}.
\newblock
\urldef\tempurl%
\url{https://doi.org/10.1007/978-3-319-43144-4\_5}
\showDOI{\tempurl}


\bibitem[Berglund et~al\mbox{.}(2021a)]%
        {Berg21}
\bibfield{author}{\bibinfo{person}{Martin Berglund}, \bibinfo{person}{Willem
  Bester}, {and} \bibinfo{person}{Brink {van der Merwe}}.}
  \bibinfo{year}{2021}\natexlab{a}.
\newblock \showarticletitle{Formalising and implementing {Boost} {POSIX}
  regular expression matching}.
\newblock \bibinfo{journal}{\emph{Theoretical Computer Science}}
  \bibinfo{volume}{857} (\bibinfo{year}{2021}), \bibinfo{pages}{147--165}.
\newblock
\urldef\tempurl%
\url{https://doi.org/10.1016/j.tcs.2021.01.010}
\showDOI{\tempurl}


\bibitem[Berglund et~al\mbox{.}(2021b)]%
        {Berglund21}
\bibfield{author}{\bibinfo{person}{Martin Berglund}, \bibinfo{person}{Brink
  van~der Merwe}, {and} \bibinfo{person}{Steyn van Litsenborgh}.}
  \bibinfo{year}{2021}\natexlab{b}.
\newblock \showarticletitle{Regular Expressions with Lookahead}.
\newblock \bibinfo{journal}{\emph{Journal of Universal Computer Science}}
  \bibinfo{volume}{27}, \bibinfo{number}{4} (\bibinfo{year}{2021}),
  \bibinfo{pages}{324--340}.
\newblock
\urldef\tempurl%
\url{https://doi.org/10.3897/jucs.66330}
\showDOI{\tempurl}


\bibitem[Brzozowski(1964)]%
        {Brz64}
\bibfield{author}{\bibinfo{person}{Janusz~A. Brzozowski}.}
  \bibinfo{year}{1964}\natexlab{}.
\newblock \showarticletitle{Derivatives of regular expressions}.
\newblock \bibinfo{journal}{\emph{JACM}}  \bibinfo{volume}{11}
  (\bibinfo{year}{1964}), \bibinfo{pages}{481--494}.
\newblock
\urldef\tempurl%
\url{https://doi.org/10.1145/321239.321249}
\showDOI{\tempurl}


\bibitem[Carle and Narendran(2009)]%
        {DBLP:conf/lata/CarleN09}
\bibfield{author}{\bibinfo{person}{Benjamin Carle} {and}
  \bibinfo{person}{Paliath Narendran}.} \bibinfo{year}{2009}\natexlab{}.
\newblock \showarticletitle{On Extended Regular Expressions}. In
  \bibinfo{booktitle}{\emph{Language and Automata Theory and Applications,
  Third International Conference, {LATA} 2009, Tarragona, Spain, April 2-8,
  2009. Proceedings}} \emph{(\bibinfo{series}{Lecture Notes in Computer
  Science}, Vol.~\bibinfo{volume}{5457})},
  \bibfield{editor}{\bibinfo{person}{Adrian{-}Horia Dediu},
  \bibinfo{person}{Armand{-}Mihai Ionescu}, {and} \bibinfo{person}{Carlos
  Mart{\'{\i}}n{-}Vide}} (Eds.). \bibinfo{publisher}{Springer},
  \bibinfo{pages}{279--289}.
\newblock
\urldef\tempurl%
\url{https://doi.org/10.1007/978-3-642-00982-2\_24}
\showDOI{\tempurl}


\bibitem[Chen et~al\mbox{.}(2022)]%
        {10.1145/3498707}
\bibfield{author}{\bibinfo{person}{Taolue Chen}, \bibinfo{person}{Alejandro
  Flores-Lamas}, \bibinfo{person}{Matthew Hague}, \bibinfo{person}{Zhilei Han},
  \bibinfo{person}{Denghang Hu}, \bibinfo{person}{Shuanglong Kan},
  \bibinfo{person}{Anthony~W. Lin}, \bibinfo{person}{Philipp R\"{u}mmer}, {and}
  \bibinfo{person}{Zhilin Wu}.} \bibinfo{year}{2022}\natexlab{}.
\newblock \showarticletitle{Solving String Constraints with Regex-Dependent
  Functions through Transducers with Priorities and Variables}.
\newblock \bibinfo{journal}{\emph{Proc. ACM Program. Lang.}}
  \bibinfo{volume}{6}, \bibinfo{number}{POPL}, Article \bibinfo{articleno}{45}
  (\bibinfo{date}{jan} \bibinfo{year}{2022}), \bibinfo{numpages}{31}~pages.
\newblock
\urldef\tempurl%
\url{https://doi.org/10.1145/3498707}
\showDOI{\tempurl}


\bibitem[Chida and Terauchi(2023)]%
        {DBLP:journals/ieicetd/ChidaT23}
\bibfield{author}{\bibinfo{person}{Nariyoshi Chida} {and}
  \bibinfo{person}{Tachio Terauchi}.} \bibinfo{year}{2023}\natexlab{}.
\newblock \showarticletitle{On Lookaheads in Regular Expressions with
  Backreferences}.
\newblock \bibinfo{journal}{\emph{{IEICE} Trans. Inf. Syst.}}
  \bibinfo{volume}{106}, \bibinfo{number}{5} (\bibinfo{year}{2023}),
  \bibinfo{pages}{959--975}.
\newblock
\urldef\tempurl%
\url{https://doi.org/10.1587/transinf.2022edp7098}
\showDOI{\tempurl}


\bibitem[Cox(2010)]%
        {Cox10}
\bibfield{author}{\bibinfo{person}{Russ Cox}.} \bibinfo{year}{2010}\natexlab{}.
\newblock \bibinfo{title}{Regular Expression Matching in the Wild}.
\newblock
\newblock
\urldef\tempurl%
\url{https://swtch.com/\~rsc/regexp/regexp3.html}
\showURL{%
\tempurl}


\bibitem[{CVC4}(2020)]%
        {CVC4}
\bibfield{author}{\bibinfo{person}{{CVC4}}.} \bibinfo{year}{2020}\natexlab{}.
\newblock
\newblock
\newblock
\shownote{\url{https://github.com/CVC4/CVC4}}.


\bibitem[de~Moura and Bj{\o}rner(2008)]%
        {BM08}
\bibfield{author}{\bibinfo{person}{Leonardo de Moura} {and}
  \bibinfo{person}{Nikolaj Bj{\o}rner}.} \bibinfo{year}{2008}\natexlab{}.
\newblock \showarticletitle{{{Z3}: An Efficient {SMT} Solver}}. In
  \bibinfo{booktitle}{\emph{TACAS'08}} \emph{(\bibinfo{series}{LNCS})}.
  \bibinfo{publisher}{Springer}, \bibinfo{pages}{337--340}.
\newblock
\urldef\tempurl%
\url{https://doi.org/10.1007/978-3-540-78800-3\_24}
\showDOI{\tempurl}


\bibitem[Fischer et~al\mbox{.}(2010)]%
        {Fis10}
\bibfield{author}{\bibinfo{person}{Sebastian Fischer}, \bibinfo{person}{Frank
  Huch}, {and} \bibinfo{person}{Thomas Wilke}.}
  \bibinfo{year}{2010}\natexlab{}.
\newblock \showarticletitle{A Play on Regular Expressions: Functional Pearl}.
\newblock \bibinfo{journal}{\emph{SIGPLAN Not.}} \bibinfo{volume}{45},
  \bibinfo{number}{9} (\bibinfo{year}{2010}), \bibinfo{pages}{357--368}.
\newblock
\urldef\tempurl%
\url{https://doi.org/10.1145/1863543.1863594}
\showDOI{\tempurl}


\bibitem[Friedl(2006)]%
        {friedl2006mastering}
\bibfield{author}{\bibinfo{person}{J.E.F. Friedl}.}
  \bibinfo{year}{2006}\natexlab{}.
\newblock \bibinfo{booktitle}{\emph{Mastering Regular Expressions, 3rd
  Edition}}.
\newblock \bibinfo{publisher}{O'Reilly Media, Incorporated}.
\newblock
\urldef\tempurl%
\url{https://books.google.ee/books?id=px6JzwEACAAJ}
\showURL{%
\tempurl}


\bibitem[Frisch and Cardelli(2004)]%
        {FC04}
\bibfield{author}{\bibinfo{person}{Alain Frisch} {and} \bibinfo{person}{Luca
  Cardelli}.} \bibinfo{year}{2004}\natexlab{}.
\newblock \showarticletitle{Greedy Regular Expression Matching}. In
  \bibinfo{booktitle}{\emph{Automata, Languages and Programming (ICALP'04)}}
  \emph{(\bibinfo{series}{LNCS}, Vol.~\bibinfo{volume}{3142})},
  \bibfield{editor}{\bibinfo{person}{Josep D{\'i}az}, \bibinfo{person}{Juhani
  Karhum{\"a}ki}, \bibinfo{person}{Arto Lepist{\"o}}, {and}
  \bibinfo{person}{Donald Sannella}} (Eds.). \bibinfo{publisher}{Springer},
  \bibinfo{pages}{618--629}.
\newblock
\urldef\tempurl%
\url{https://doi.org/10.1007/978-3-540-27836-8\_53}
\showDOI{\tempurl}


\bibitem[Gelade and Neven(2012)]%
        {Gelade2012}
\bibfield{author}{\bibinfo{person}{Wouter Gelade} {and} \bibinfo{person}{Frank
  Neven}.} \bibinfo{year}{2012}\natexlab{}.
\newblock \showarticletitle{Succinctness of the Complement and Intersection of
  Regular Expressions}.
\newblock \bibinfo{journal}{\emph{{ACM} Trans. Comput. Log.}}
  \bibinfo{volume}{13}, \bibinfo{number}{1} (\bibinfo{year}{2012}),
  \bibinfo{pages}{4:1--4:19}.
\newblock
\urldef\tempurl%
\url{https://doi.org/10.1145/2071368.2071372}
\showDOI{\tempurl}


\bibitem[Glaunec et~al\mbox{.}(2023)]%
        {DBLP:journals/pacmpl/GlaunecKM23}
\bibfield{author}{\bibinfo{person}{Alexis~Le Glaunec}, \bibinfo{person}{Lingkun
  Kong}, {and} \bibinfo{person}{Konstantinos Mamouras}.}
  \bibinfo{year}{2023}\natexlab{}.
\newblock \showarticletitle{Regular Expression Matching using Bit Vector
  Automata}.
\newblock \bibinfo{journal}{\emph{Proc. {ACM} Program. Lang.}}
  \bibinfo{volume}{7}, \bibinfo{number}{{OOPSLA1}} (\bibinfo{year}{2023}),
  \bibinfo{pages}{492--521}.
\newblock
\urldef\tempurl%
\url{https://doi.org/10.1145/3586044}
\showDOI{\tempurl}


\bibitem[Glushkov(1961)]%
        {Glu61}
\bibfield{author}{\bibinfo{person}{V.~M. Glushkov}.}
  \bibinfo{year}{1961}\natexlab{}.
\newblock \showarticletitle{The abstract theory of automata}.
\newblock \bibinfo{journal}{\emph{Russian Math. Surveys}}  \bibinfo{volume}{16}
  (\bibinfo{year}{1961}), \bibinfo{pages}{1--53}.
\newblock
\urldef\tempurl%
\url{https://doi.org/10.1070/RM1961v016n05ABEH004112}
\showDOI{\tempurl}


\bibitem[{GNU}(2023)]%
        {grep}
\bibfield{author}{\bibinfo{person}{{GNU}}.} \bibinfo{year}{2023}\natexlab{}.
\newblock \bibinfo{title}{{grep}}.
\newblock
\newblock
\newblock
\shownote{\url{https://www.gnu.org/software/grep/}}.


\bibitem[{Google}(2023)]%
        {re2}
\bibfield{author}{\bibinfo{person}{{Google}}.} \bibinfo{year}{2023}\natexlab{}.
\newblock \bibinfo{title}{{RE2}}.
\newblock
\newblock
\newblock
\shownote{\url{https://github.com/google/re2}}.


\bibitem[Herczeg(2015)]%
        {twain}
\bibfield{author}{\bibinfo{person}{Zoltan Herczeg}.}
  \bibinfo{year}{2015}\natexlab{}.
\newblock \bibinfo{title}{Performance comparison of regular expression
  engines}.
\newblock
\newblock
\newblock
\shownote{\url{https://zherczeg.github.io/sljit/regex\_perf.html}}.


\bibitem[Hol{\'{\i}}k et~al\mbox{.}(2023)]%
        {DBLP:conf/fossacs/HolikSTV23}
\bibfield{author}{\bibinfo{person}{Luk{\'{a}}s Hol{\'{\i}}k},
  \bibinfo{person}{Juraj S{\'{\i}}c}, \bibinfo{person}{Lenka Turonov{\'{a}}},
  {and} \bibinfo{person}{Tom{\'{a}}s Vojnar}.} \bibinfo{year}{2023}\natexlab{}.
\newblock \showarticletitle{Fast Matching of Regular Patterns with
  Synchronizing Counting}. In \bibinfo{booktitle}{\emph{Foundations of Software
  Science and Computation Structures - 26th International Conference, FoSSaCS
  2023, Held as Part of the European Joint Conferences on Theory and Practice
  of Software, {ETAPS} 2023, Paris, France, April 22-27, 2023, Proceedings}}
  \emph{(\bibinfo{series}{Lecture Notes in Computer Science},
  Vol.~\bibinfo{volume}{13992})}, \bibfield{editor}{\bibinfo{person}{Orna
  Kupferman} {and} \bibinfo{person}{Pawel Sobocinski}} (Eds.).
  \bibinfo{publisher}{Springer}, \bibinfo{pages}{392--412}.
\newblock
\urldef\tempurl%
\url{https://doi.org/10.1007/978-3-031-30829-1\_19}
\showDOI{\tempurl}


\bibitem[Kozen(1997)]%
        {Kozen97}
\bibfield{author}{\bibinfo{person}{Dexter Kozen}.}
  \bibinfo{year}{1997}\natexlab{}.
\newblock \showarticletitle{Kleene algebra with tests}.
\newblock \bibinfo{journal}{\emph{TOPLAS}} \bibinfo{volume}{19},
  \bibinfo{number}{3} (\bibinfo{year}{1997}), \bibinfo{pages}{427--443}.
\newblock
\urldef\tempurl%
\url{https://doi.org/10.1145/256167.256195}
\showDOI{\tempurl}


\bibitem[Laurikari(2000)]%
        {Lau2000}
\bibfield{author}{\bibinfo{person}{V. Laurikari}.}
  \bibinfo{year}{2000}\natexlab{}.
\newblock \showarticletitle{{NFAs} with tagged transitions, their conversion to
  deterministic automata and application to regular expressions}. In
  \bibinfo{booktitle}{\emph{7th International Symposium on String Processing
  and Information Retrieval}}. \bibinfo{pages}{181--187}.
\newblock
\urldef\tempurl%
\url{https://doi.org/10.1109/SPIRE.2000.878194}
\showDOI{\tempurl}


\bibitem[Liang et~al\mbox{.}(2015)]%
        {CVC4deriv}
\bibfield{author}{\bibinfo{person}{Tianyi Liang}, \bibinfo{person}{Nestan
  Tsiskaridze}, \bibinfo{person}{Andrew Reynolds}, \bibinfo{person}{Cesare
  Tinelli}, {and} \bibinfo{person}{Clark Barrett}.}
  \bibinfo{year}{2015}\natexlab{}.
\newblock \showarticletitle{A Decision Procedure for Regular Membership and
  Length Constraints over Unbounded Strings?}. In
  \bibinfo{booktitle}{\emph{Frontiers of Combining Systems, FroCoS 2015}}
  \emph{(\bibinfo{series}{LNCS}, Vol.~\bibinfo{volume}{9322})}.
  \bibinfo{publisher}{Springer}, \bibinfo{pages}{135--150}.
\newblock
\urldef\tempurl%
\url{https://doi.org/10.1007/978-3-319-24246-0\_9}
\showDOI{\tempurl}


\bibitem[Loring et~al\mbox{.}(2019)]%
        {LMK19}
\bibfield{author}{\bibinfo{person}{Blake Loring}, \bibinfo{person}{Duncan
  Mitchell}, {and} \bibinfo{person}{Johannes Kinder}.}
  \bibinfo{year}{2019}\natexlab{}.
\newblock \showarticletitle{Sound Regular Expression Semantics for Dynamic
  Symbolic Execution of JavaScript}. In \bibinfo{booktitle}{\emph{PLDI'19}}.
  \bibinfo{publisher}{ACM}, \bibinfo{pages}{425--438}.
\newblock
\urldef\tempurl%
\url{https://doi.org/10.1145/3314221.3314645}
\showDOI{\tempurl}


\bibitem[Microsoft(2021)]%
        {CSharpRegexRef}
\bibfield{author}{\bibinfo{person}{Microsoft}.}
  \bibinfo{year}{2021}\natexlab{}.
\newblock \bibinfo{title}{{Regular Expression Language - Quick Reference}}.
\newblock
\newblock
\newblock
\shownote{\url{https://docs.microsoft.com/en-us/dotnet/standard/base-types/regular-expression-language-quick-reference}}.


\bibitem[Miyazaki and Minamide(2019)]%
        {Miyazaki2019}
\bibfield{author}{\bibinfo{person}{Takayuki Miyazaki} {and}
  \bibinfo{person}{Yasuhiko Minamide}.} \bibinfo{year}{2019}\natexlab{}.
\newblock \showarticletitle{Derivatives of Regular Expressions with Lookahead}.
\newblock \bibinfo{journal}{\emph{J. Inf. Process.}}  \bibinfo{volume}{27}
  (\bibinfo{year}{2019}), \bibinfo{pages}{422--430}.
\newblock
\urldef\tempurl%
\url{https://doi.org/10.2197/ipsjjip.27.422}
\showDOI{\tempurl}


\bibitem[Morihata(2012)]%
        {Morihata12}
\bibfield{author}{\bibinfo{person}{Akimasa Morihata}.}
  \bibinfo{year}{2012}\natexlab{}.
\newblock \showarticletitle{Translation of Regular Expression with Lookahead
  into Finite State Automaton}.
\newblock \bibinfo{journal}{\emph{Computer Software}} \bibinfo{volume}{29},
  \bibinfo{number}{1} (\bibinfo{year}{2012}), \bibinfo{pages}{147--158}.
\newblock
\urldef\tempurl%
\url{https://doi.org/10.11309/jssst.29.1\_147}
\showDOI{\tempurl}


\bibitem[Moseley et~al\mbox{.}(2023)]%
        {PLDI2023}
\bibfield{author}{\bibinfo{person}{Dan Moseley}, \bibinfo{person}{Mario
  Nishio}, \bibinfo{person}{Jose {Perez Rodriguez}}, \bibinfo{person}{Olli
  Saarikivi}, \bibinfo{person}{Stephen Toub}, \bibinfo{person}{Margus Veanes},
  \bibinfo{person}{Tiki Wan}, {and} \bibinfo{person}{Eric Xu}.}
  \bibinfo{year}{2023}\natexlab{}.
\newblock \showarticletitle{Derivative Based Nonbacktracking Real-World Regex
  Matching with Backtracking Semantics}. In \bibinfo{booktitle}{\emph{{PLDI}
  '23: 44th {ACM} {SIGPLAN} International Conference on Programming Language
  Design and Implementation, Florida, USA, June 17-21, 2023}},
  \bibfield{editor}{\bibinfo{person}{Nate~Foster et~al.}} (Ed.).
  \bibinfo{publisher}{{ACM}}, \bibinfo{pages}{1--2}.
\newblock


\bibitem[Owens et~al\mbox{.}(2009)]%
        {Owens09}
\bibfield{author}{\bibinfo{person}{Scott Owens}, \bibinfo{person}{John Reppy},
  {and} \bibinfo{person}{Aaron Turon}.} \bibinfo{year}{2009}\natexlab{}.
\newblock \showarticletitle{Regular-expression Derivatives Re-examined}.
\newblock \bibinfo{journal}{\emph{J. Funct. Program.}} \bibinfo{volume}{19},
  \bibinfo{number}{2} (\bibinfo{year}{2009}), \bibinfo{pages}{173--190}.
\newblock
\urldef\tempurl%
\url{https://doi.org/10.1017/S0956796808007090}
\showDOI{\tempurl}


\bibitem[Pous(2015)]%
        {Pous15}
\bibfield{author}{\bibinfo{person}{Damien Pous}.}
  \bibinfo{year}{2015}\natexlab{}.
\newblock \showarticletitle{Symbolic Algorithms for Language Equivalence and
  Kleene Algebra with Tests}.
\newblock \bibinfo{journal}{\emph{ACM SIGPLAN Notices -- POPL'15}}
  \bibinfo{volume}{50}, \bibinfo{number}{1} (\bibinfo{year}{2015}),
  \bibinfo{pages}{357--368}.
\newblock
\urldef\tempurl%
\url{https://doi.org/10.1145/2775051.2677007}
\showDOI{\tempurl}


\bibitem[Saarikivi et~al\mbox{.}(2019)]%
        {Vea19}
\bibfield{author}{\bibinfo{person}{Olli Saarikivi}, \bibinfo{person}{Margus
  Veanes}, \bibinfo{person}{Tiki Wan}, {and} \bibinfo{person}{Eric Xu}.}
  \bibinfo{year}{2019}\natexlab{}.
\newblock \showarticletitle{Symbolic Regex Matcher}. In
  \bibinfo{booktitle}{\emph{Tools and Algorithms for the Construction and
  Analysis of Systems}} \emph{(\bibinfo{series}{LNCS},
  Vol.~\bibinfo{volume}{11427})},
  \bibfield{editor}{\bibinfo{person}{Tom{\'a}{\v{s}} Vojnar} {and}
  \bibinfo{person}{Lijun Zhang}} (Eds.). \bibinfo{publisher}{Springer},
  \bibinfo{pages}{372--378}.
\newblock
\urldef\tempurl%
\url{https://doi.org/10.1007/978-3-030-17462-0\_24}
\showDOI{\tempurl}


\bibitem[{SMT-LIB}(2021)]%
        {SMTLIB}
\bibfield{author}{\bibinfo{person}{{SMT-LIB}}.}
  \bibinfo{year}{2021}\natexlab{}.
\newblock \bibinfo{title}{The Satisfiability Modulo Theories Library}.
\newblock
\newblock
\urldef\tempurl%
\url{http://smtlib.cs.uiowa.edu/}
\showURL{%
\tempurl}


\bibitem[Stanford et~al\mbox{.}(2021)]%
        {SVB21}
\bibfield{author}{\bibinfo{person}{Caleb Stanford}, \bibinfo{person}{Margus
  Veanes}, {and} \bibinfo{person}{Nikolaj Bj{\o}rner}.}
  \bibinfo{year}{2021}\natexlab{}.
\newblock \showarticletitle{Symbolic Boolean Derivatives for Efficiently
  Solving Extended Regular Expression Constraints}. In
  \bibinfo{booktitle}{\emph{PLDI'21}}. \bibinfo{publisher}{ACM},
  \bibinfo{pages}{620--635}.
\newblock
\urldef\tempurl%
\url{https://doi.org/10.1145/3453483.3454066}
\showDOI{\tempurl}


\bibitem[Sulzmann and Lu(2012)]%
        {SuLu12}
\bibfield{author}{\bibinfo{person}{Martin Sulzmann} {and}
  \bibinfo{person}{Kenny Zhuo~Ming Lu}.} \bibinfo{year}{2012}\natexlab{}.
\newblock \showarticletitle{Regular Expression Sub-Matching Using Partial
  Derivatives}. In \bibinfo{booktitle}{\emph{Proceedings of the 14th Symposium
  on Principles and Practice of Declarative Programming (PPDP'12)}}.
  \bibinfo{publisher}{ACM}, \bibinfo{address}{New York, NY, USA},
  \bibinfo{pages}{79--90}.
\newblock
\urldef\tempurl%
\url{https://doi.org/10.1145/2370776.2370788}
\showDOI{\tempurl}


\bibitem[Tan and Urban(2023)]%
        {TanUrban23}
\bibfield{author}{\bibinfo{person}{Chengsong Tan} {and}
  \bibinfo{person}{Christian Urban}.} \bibinfo{year}{2023}\natexlab{}.
\newblock \showarticletitle{{POSIX} Lexing with Bitcoded Derivatives}. In
  \bibinfo{booktitle}{\emph{14th International Conference on Interactive
  Theorem Proving}} \emph{(\bibinfo{series}{LIPICS}, \bibinfo{number}{26})},
  \bibfield{editor}{\bibinfo{person}{A.~Naumowicz} {and}
  \bibinfo{person}{R.~Thiemann}} (Eds.). \bibinfo{publisher}{Dagstuhl
  Publishing}, \bibinfo{pages}{26:1--26:18}.
\newblock


\bibitem[Thompson(1968)]%
        {Thom68}
\bibfield{author}{\bibinfo{person}{Ken Thompson}.}
  \bibinfo{year}{1968}\natexlab{}.
\newblock \showarticletitle{Programming Techniques: Regular Expression Search
  Algorithm}.
\newblock \bibinfo{journal}{\emph{Commun. ACM}} \bibinfo{volume}{11},
  \bibinfo{number}{6} (\bibinfo{date}{jun} \bibinfo{year}{1968}),
  \bibinfo{pages}{419--422}.
\newblock
\showISSN{0001-0782}
\urldef\tempurl%
\url{https://doi.org/10.1145/363347.363387}
\showDOI{\tempurl}


\bibitem[Turo{\v n}ov{\'a} et~al\mbox{.}(2022)]%
        {THHLVV22}
\bibfield{author}{\bibinfo{person}{Lenka Turo{\v n}ov{\'a}},
  \bibinfo{person}{Luk{\'a}{\v s} Hol{\'\i}k}, \bibinfo{person}{Ivan Homoliak},
  \bibinfo{person}{Ond{\v r}ej Leng{\'a}l}, \bibinfo{person}{Margus Veanes},
  {and} \bibinfo{person}{Tom{\'a}{\v s} Vojnar}.}
  \bibinfo{year}{2022}\natexlab{}.
\newblock \showarticletitle{Counting in Regexes Considered Harmful: Exposing
  {ReDoS} Vulnerability of Nonbacktracking Matchers}. In
  \bibinfo{booktitle}{\emph{31st USENIX Security Symposium (USENIX Security
  22)}}. \bibinfo{publisher}{USENIX Association}, \bibinfo{address}{Boston,
  MA}, \bibinfo{pages}{4165--4182}.
\newblock
\urldef\tempurl%
\url{https://www.usenix.org/conference/usenixsecurity22/presentation/turonova}
\showURL{%
\tempurl}


\bibitem[Turo\v{n}ov\'{a} et~al\mbox{.}(2020)]%
        {CsA20}
\bibfield{author}{\bibinfo{person}{Lenka Turo\v{n}ov\'{a}},
  \bibinfo{person}{Luk\'{a}\v{s} Hol\'{\i}k}, \bibinfo{person}{Ond\v{r}ej
  Leng\'{a}l}, \bibinfo{person}{Olli Saarikivi}, \bibinfo{person}{Margus
  Veanes}, {and} \bibinfo{person}{Tom\'{a}\v{s} Vojnar}.}
  \bibinfo{year}{2020}\natexlab{}.
\newblock \showarticletitle{Regex Matching with Counting-Set Automata}.
\newblock \bibinfo{journal}{\emph{Proc. ACM Program. Lang.}}
  \bibinfo{volume}{4}, \bibinfo{number}{OOPSLA}, Article
  \bibinfo{articleno}{218} (\bibinfo{date}{Nov.} \bibinfo{year}{2020}).
\newblock
\urldef\tempurl%
\url{https://doi.org/10.1145/3428286}
\showDOI{\tempurl}


\bibitem[Varatalu(2023)]%
        {sbre_zenodo}
\bibfield{author}{\bibinfo{person}{Ian~Erik Varatalu}.}
  \bibinfo{year}{2023}\natexlab{}.
\newblock \bibinfo{booktitle}{\emph{SBRE - Symbolic Boolean Regular
  Expressions}}.
\newblock
\urldef\tempurl%
\url{https://doi.org/10.5281/zenodo.8369590}
\showDOI{\tempurl}


\bibitem[Wang et~al\mbox{.}(2019)]%
        {HyperscanUsenix19}
\bibfield{author}{\bibinfo{person}{Xiang Wang}, \bibinfo{person}{Yang Hong},
  \bibinfo{person}{Harry Chang}, \bibinfo{person}{KyoungSoo Park},
  \bibinfo{person}{Geoff Langdale}, \bibinfo{person}{Jiayu Hu}, {and}
  \bibinfo{person}{Heqing Zhu}.} \bibinfo{year}{2019}\natexlab{}.
\newblock \showarticletitle{Hyperscan: A Fast Multi-pattern Regex Matcher for
  Modern {CPUs}}. In \bibinfo{booktitle}{\emph{16th USENIX Symposium on
  Networked Systems Design and Implementation (NSDI 19)}}.
  \bibinfo{publisher}{USENIX Association}, \bibinfo{address}{Boston, MA},
  \bibinfo{pages}{631--648}.
\newblock
\showISBNx{978-1-931971-49-2}
\urldef\tempurl%
\url{https://www.usenix.org/conference/nsdi19/presentation/wang-xiang}
\showURL{%
\tempurl}


\bibitem[Wingbrant(2019)]%
        {Wing19}
\bibfield{author}{\bibinfo{person}{Ola Wingbrant}.}
  \bibinfo{year}{2019}\natexlab{}.
\newblock \bibinfo{title}{Regular languages, derivatives and finite automata}.
\newblock
\newblock
\urldef\tempurl%
\url{https://doi.org/10.48550/ARXIV.1907.13577}
\showDOI{\tempurl}


\end{thebibliography}

\end{document}